\theoremstyle{plain}
\newtheorem{theorem}{Theorem}
\newtheorem{lemma}{Lemma}
\newtheorem{claim}{Claim}
\newtheorem{corollary}{Corollary}
\theoremstyle{definition}
\newtheorem{conjecture}{Conjecture}
\theoremstyle{remark}
\date{}
\begin{document}

\title{Maximum independent sets in subcubic graphs: \\ new results}

\author{
Ararat Harutyunyan\thanks{Universit\'e Paris-Dauphine, Universit\'e PSL, CNRS, LAMSADE, 75016 PARIS, FRANCE. Email:  ararat.harutyunyan@dauphine.fr} \and
Michael Lampis\thanks{Universit\'e Paris-Dauphine, Universit\'e PSL, CNRS, LAMSADE, 75016 PARIS, FRANCE. Email:  michail.lampis@dauphine.fr} \and
Vadim Lozin\thanks{Mathematics Institute, University of Warwick, Coventry, CV4 7AL, UK. Email:  V.Lozin@warwick.ac.uk}
\and 
J{\'{e}}r{\^{o}}me Monnot\thanks{Universit\'e Paris-Dauphine, Universit\'e PSL, CNRS, LAMSADE, 75016 PARIS, FRANCE. Email:  Jerome.Monnot@dauphine.fr} }

\maketitle
\begin{abstract}
The maximum independent set problem is known to be NP-hard in the class of subcubic graphs, i.e. graphs of vertex degree at most 3.
We present a polynomial-time solution in a subclass of subcubic graphs generalizing several previously known results. 

\end{abstract}

\section{Introduction}
In a graph, an {\it independent set} is a subset of vertices no two of which are adjacent. The maximum independent set 
problem asks to find in a graph $G$ an independent set of maximum size. The size of a maximum independent set in $G$ is 
called the {\it independence number} of $G$ and is denoted $\alpha(G)$.  

The maximum independent set problem is one of the first problems that has been shown to be NP-hard. Moreover, the problem 
remains NP-hard under substantial restrictions. In particular, it is NP-hard for graphs of vertex degree at most 3, also 
known as {\it subcubic} graphs. In terms of vertex degree, this is the strongest possible restriction under which the problem 
remains NP-hard, since for graphs of vertex degree at most 2 the problem is solvable in polynomial time. However, with 
respect to other parameters the restriction to subcubic graphs is not best possible, as the problem remains NP-hard 
for subcubic graphs of girth at least $k$ for any fixed value of $k$ \cite{large-girth}, where the girth of a graph is the size 
of a smallest cycle. In other words, the problem is NP-hard  for $(C_3,\ldots,C_k)$-free subcubic 
graphs for each value of $k$, where $C_k$ is a chordless cycle of length $k$. The idea behind this conclusion is quite simple:
it is not difficult to see that  a double subdivision of an edge increases the independence number of the graph by exactly one,
and hence, by repeatedly subdividing the edges of a subcubic graph $G$ we can destroy all small cycles in $G$, i.e. we can transform 
$G$ into a graph of large girth. 

Let us observe that by means of edge subdivisions we can also destroy small copies of some other graphs, in particular, 
graphs of the form $H_k$ represented in Figure~\ref{fig:S} (left) . Therefore, the maximum independent set problem remains  
NP-hard for $(C_3,\ldots,C_k,H_1,\ldots,H_k)$-free subcubic graphs for each value of $k$. 


Let us denote by $S_k$ the class of $(C_3,\ldots,C_k,H_1,\ldots,H_k)$-free subcubic graphs and by $\kappa(G)$ the maximum $k$ such that $G\in S_k$. 
If $G$ belongs to no class $S_k$, then $\kappa(G)$ is defined to be $0$, and if $G$ belongs to all classes $S_k$, then $\kappa(G)$ is defined to be $\infty$. 
Also, for a set of graphs $M$, $\kappa(M)$ is defined as $\kappa(M) =\sup\{\kappa(G)\  :\  G \in M\}$. With this notation, we can derive the following 
conclusion from the above discussion (see e.g. \cite{large-apple}). 

\begin{theorem}
Let $M$ be a set of graphs.
If $\kappa(M)< \infty$, then the maximum independent set problem is NP-hard in the class of  $M$-free subcubic graphs.
\end{theorem}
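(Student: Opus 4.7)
The plan is to construct a polynomial-time reduction from the maximum independent set problem on arbitrary subcubic graphs (which is NP-hard) to the same problem restricted to $M$-free subcubic graphs. Let $k := \kappa(M)$; by hypothesis $k$ is a finite constant depending only on $M$.

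Given an input subcubic graph $G$, I would construct $G'$ by applying to each edge of $G$ the same fixed number $s$ of double subdivisions, where a double subdivision of an edge $uv$ replaces it by an induced path on the four vertices $u, x, y, v$ with $x,y$ new degree-$2$ vertices. This operation preserves subcubicity and, by the remark made in the introduction, increases the independence number by exactly one. Consequently, if $t$ is the total number of double subdivisions performed, then $\alpha(G') = \alpha(G) + t$, so $\alpha(G)$ is recoverable from $\alpha(G')$; and for $s$ constant, $t = s\cdot|E(G)|$ is polynomial in $|V(G)|$.

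The substantive step is to choose $s$ large enough (as a function of $k$ alone) so that $G' \in S_{k+1}$, i.e.\ that $G'$ contains no induced $C_i$ or $H_i$ for $i\le k+1$. The cycle condition is straightforward: after $s$ double subdivisions of each edge, every cycle of $G'$ has length at least $3(2s+1)$, which exceeds $k+1$ once $s$ is large. Ruling out small induced copies of $H_1,\dots,H_{k+1}$ is more delicate; I would invoke, and if necessary verify by induction on $i$, the assertion made informally in the paragraph preceding the theorem, namely that edge subdivisions destroy small copies of each $H_i$ in Figure~\ref{fig:S}. Once this is done, $s$ is a constant depending only on $k$ and the reduction is polynomial.

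The $M$-freeness of $G'$ is then automatic. Were some $H \in M$ an induced subgraph of $G'$, then $H$ would be subcubic and would inherit from $G'$ the absence of induced copies of $C_3,\dots,C_{k+1}$ and $H_1,\dots,H_{k+1}$; hence $H \in S_{k+1}$ and $\kappa(H) \ge k+1 > k = \kappa(M)$, contradicting the definition of $\kappa(M)$ as a supremum. The main obstacle in carrying out the plan is thus precisely the $H_i$-destruction step: one must understand the structure of the graphs $H_k$ sharply enough to verify that a constant number of subdivisions per edge cannot create any new small induced copy of some $H_i$. The cycle case and the $\alpha$-bookkeeping are routine.
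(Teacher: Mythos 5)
Your proposal is correct and follows exactly the argument the paper itself sketches (the paper gives no formal proof, only the introduction's discussion of double subdivisions plus a citation): subdivide each edge a constant number of times to kill all small induced $C_i$ and $H_i$, track $\alpha$ exactly, and use heredity of $S_{k+1}$ to get $M$-freeness. The step you flag as delicate is in fact immediate: every vertex of degree $3$ in the subdivided graph is an original vertex, the two branch vertices of an induced $H_i$ must have degree $3$ and are joined by an induced path of $i$ edges, and any path between two original vertices has length at least $2s+1$, so $i\ge 2s+1>k$ once $s$ is large enough.
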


This theorem suggests that, unless $P = NP$, the maximum independent set problem is solvable in polynomial time in the class of $M$-free graphs only if the parameter $\kappa$ 
is unbounded in the set $M$. There are three basic ways to unbind this parameter in $M$: 
\begin{itemize}
\item[1.] include in $M$ a graph $G$ with $\kappa(G) =\infty$; 
\item[2.] include in $M$ graphs with arbitrarily large induced cycles; 
\item[3.] include in $M$ graphs with arbitrarily large induced subgraphs of the form $H_k$. 
\end{itemize}

To give an example of a polynomial-time result of the first type, let us observe that $\kappa(G) =\infty$  if and only if  every connected component of $G$ has the form $S_{i,j,k}$
represented in Figure~\ref{fig:S} (right). We call any graph of the form $S_{i,j,k}$ a {\it tripod}. 
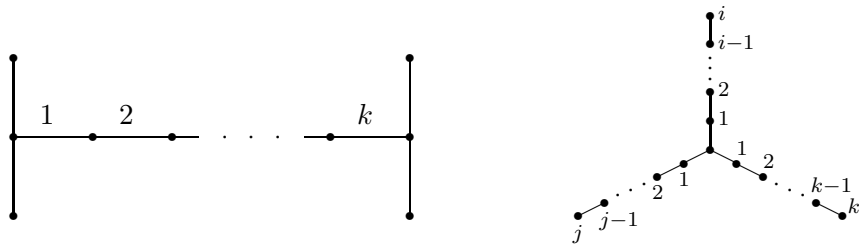
\begin{figure}[ht]
\begin{center} 
\begin{picture}(200,100)
\put(50,40){\circle*{3}} \put(80,40){\circle*{3}} \put(110,40){\circle*{3}}
\put(130,40){\circle*{1}} \put(140,40){\circle*{1}} \put(150,40){\circle*{1}}
\put(170,40){\circle*{3}} \put(200,40){\circle*{3}} \put(50,70){\circle*{3}}
\put(50,10){\circle*{3}} \put(200,70){\circle*{3}} \put(200,10){\circle*{3}}
\put(50,40){\line(1,0){30}} \put(80,40){\line(1,0){30}} \put(110,40){\line(1,0){10}}
\put(160,40){\line(1,0){10}} \put(170,40){\line(1,0){30}}
\put(50,40){\line(0,1){30}} \put(50,40){\line(0,-1){30}}
\put(200,40){\line(0,1){30}} \put(200,40){\line(0,-1){30}} \put(60,45){1}
\put(90,45){2} \put(180,45){$k$}
\end{picture}
\begin{picture}(200,100)
\put(110,35){\circle*{3}}
\put(110,46){\circle*{3}}
\put(110,57){\circle*{3}}
\put(110,75){\circle*{3}}
\put(110,86){\circle*{3}}
\put(110,62){\circle*{1}}
\put(110,66){\circle*{1}}
\put(110,70){\circle*{1}}
\put(110,35){\line(0,1){11}}
\put(110,46){\line(0,1){11}}
\put(110,75){\line(0,1){11}}
\put(100,30){\circle*{3}}
\put(90,25){\circle*{3}}
\put(70,15){\circle*{3}}
\put(60,10){\circle*{3}}
\put(85,22){\circle*{1}}
\put(80,20){\circle*{1}}
\put(75,18){\circle*{1}}
\put(110,35){\line(-2,-1){10}}
\put(100,30){\line(-2,-1){10}}
\put(70,15){\line(-2,-1){10}}
\put(120,30){\circle*{3}}
\put(130,25){\circle*{3}}
\put(150,15){\circle*{3}}
\put(160,10){\circle*{3}}
\put(135,22){\circle*{1}}
\put(140,20){\circle*{1}}
\put(145,18){\circle*{1}}
\put(110,35){\line(2,-1){10}}
\put(120,30){\line(2,-1){10}}
\put(150,15){\line(2,-1){10}}
\put(113,46){$_1$}
\put(113,57){$_2$}
\put(113,75){$_{i-1}$}
\put(113,86){$_i$}
\put(98,23){$_1$}
\put(88,18){$_2$}
\put(68,8){$_{j-1}$}
\put(58,3){$_j$}
\put(120,35){$_1$}
\put(130,30){$_2$}
\put(148,21){$_{k-1}$}
\put(162,13){$_k$}
\end{picture} 
\end{center}
\caption{The graphs $H_k$ (left) and  $S_{i,j,k}$ (right)}
\label{fig:S}
\end{figure}

In other words, if the set $M$ of forbidden induced subgraphs is finite, then $M$ must contain a graph for which every component is a tripod
for the maximum independent set problem in the class of $M$-free subcubic graphs to be polynomial-time solvable (assuming P$\neq$NP). 
In \cite{M2I}, it was conjectured that this condition is also sufficient. Moreover, for graphs of bounded vertex degree the problem can be easily reduced 
to connected forbidden induced graphs, in which case the conjecture can be restated as follows. 

\begin{conjecture}\label{con:1}
The maximum independent set problem is polynomial-time solvable for $G$-free subcubic graphs if and only if $G$ is a tripod.
\end{conjecture}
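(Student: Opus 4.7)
The ``only if'' direction is essentially already in hand. If $G$ is a connected graph that is not a tripod, then no component of $G$ is a tripod, and hence $\kappa(\{G\}) = \kappa(G) < \infty$. Theorem~1 applied to $M = \{G\}$ then immediately yields NP-hardness of maximum independent set on $G$-free subcubic graphs. (If $G$ is disconnected but some component $G'$ is not a tripod, the same argument applied to $M = \{G'\}$ still forces NP-hardness, because a $G$-free graph is in particular $G'$-free.)

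The substantive content of the conjecture is the ``if'' direction: for every tripod $G = S_{i,j,k}$, design a polynomial-time algorithm for maximum independent set on $G$-free subcubic graphs. The plan is induction on $i+j+k$. The base case $S_{1,1,1} = K_{1,3}$ is the claw, so the problem reduces to maximum independent set in (subcubic) claw-free graphs, which is polynomial by the classical results of Minty and Sbihi. For the inductive step, let $G$ be an $S_{i,j,k}$-free subcubic graph. The strategy is a structural dichotomy. First, search for an induced path $P = v_0 v_1 \dots v_\ell$ with $\ell$ just larger than $i+j+k$. If no such path exists, then the usual BFS-layer argument together with $\Delta(G) \le 3$ bounds $|V(G)|$ by a constant depending only on $i,j,k$, and the instance is solved by brute force. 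Otherwise, $S_{i,j,k}$-freeness gives strong restrictions on how the endpoints of $P$ can branch into the rest of $G$: any ``tripod-like'' attachment would, together with a subpath of $P$, realize a forbidden $S_{i,j,k}$.

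The core technical step is to convert these restrictions into a reduction. I would try to establish one of two outcomes: either (a) one finds a local reduction (two twin vertices, a pendant path, an induced $C_3$ or $C_4$, a vertex whose removal does not affect $\alpha$, or a situation where some neighbor can be forced into the optimum) that strictly shrinks the instance while preserving $\alpha$; or (b) one finds an \emph{augmenting configuration} in the sense of Berge --- a bounded-size subgraph $A$ such that some current independent set $I$ can be improved by exchanging $I \cap N[A]$ for a larger independent subset of $N[A]$. This is the tripod-free analogue of the Minty--Sbihi augmenting-path machinery, adapted to subdivided claws. Under the inductive hypothesis, (a) lets us pass to smaller $S_{i,j,k}$-free subcubic graphs, while (b) is resolved by enumeration.

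The main obstacle is precisely the existence of short augmenting configurations in case~(b). When $i,j,k$ are all large, the forbidden pattern $S_{i,j,k}$ is very sparse, so $S_{i,j,k}$-freeness gives only a weak global constraint: the graph may still contain long induced paths, long subdivided claws whose three legs jointly fall short of $S_{i,j,k}$, and complicated induced cycles. The known reducibility arguments for small tripods such as $S_{1,1,k}$ and $S_{1,2,k}$ rely heavily on the short leg, and do not survive when every leg is long. Any complete proof of the conjecture therefore appears to require genuinely new structural theorems about subcubic graphs avoiding a single tripod --- in particular, a statement guaranteeing that every non-maximum independent set admits an augmenting configuration of size bounded in terms of $i+j+k$. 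This is where I expect the argument to stand or fall, and it is the reason the statement has remained conjectural.
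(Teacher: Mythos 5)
This statement is a \emph{conjecture}, not a theorem of the paper: the authors explicitly leave it open and only establish the special case of $(A^*_k,A^*_{k+1},\ldots)$-free (hence $S_{2,k,k}$-free) subcubic graphs, via clique-separator decomposition, extraction of a large extended cycle, and a battery of local reductions ($\Phi$, $\Pi$, $\Gamma$, $\Theta$, struction). So there is no ``paper proof'' to match yours against, and any complete proof you offered would be a major new result.

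Your ``only if'' direction is fine and is exactly what the paper intends: a connected non-tripod $G$ has $\kappa(G)<\infty$, so Theorem~1 with $M=\{G\}$ gives NP-hardness. (Note the conjecture is stated for connected $G$; the paper reduces to that case beforehand, so your parenthetical about disconnected $G$ is unnecessary.)

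The ``if'' direction, however, has a genuine and decisive gap, which to your credit you name yourself: you never establish outcome (a)-or-(b), i.e.\ that every $S_{i,j,k}$-free subcubic graph admits either an $\alpha$-preserving local reduction or a bounded-size augmenting configuration. Everything rests on that dichotomy, and no argument is given for it; the sentence ``$S_{i,j,k}$-freeness gives strong restrictions on how the endpoints of $P$ can branch'' is where a multi-page structural theorem would have to live. Two further soft spots: (i) your induction on $i+j+k$ does no work --- the classes $S_{i,j,k}$-free grow as $i+j+k$ grows, and your case (a) merely shrinks the instance within the \emph{same} class rather than dropping to a smaller tripod, so the inductive hypothesis is never actually invoked; (ii) the claim that absence of a long induced path plus $\Delta\le 3$ bounds $|V(G)|$ needs a citation or proof (it is true for connected subcubic graphs, but it is not ``the usual BFS-layer argument'' verbatim). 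In short: your proposal is an honest research plan, not a proof, and the conjecture remains open; the furthest anyone has gotten along the axis you describe is the $S_{2,k,k}$ case proved in this paper by quite different means.
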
 

One of the minimal non-trivial tripods is the claw $S_{1,1,1}$. The problem can be solved for the claw-free graphs in polynomial time even without the restriction to bounded degree graphs 
\cite{claw}. In \cite{fork}, the result for claw-free graphs was extended to $S_{1,1,2}$-free graphs, also known as fork-free graphs, and again without the restriction 
to bounded degree graphs. However, any further extension becomes much harder even for bounded degree graphs, and only recently a solution was found for $S_{2,2,2}$-free 
subcubic graphs \cite{subcubic}. Currently, this is  one of the few maximal subclasses of subcubic graphs with polynomial-time solvable independent set problem.

\medskip
Now we turn to polynomial-time solutions of the second type, i.e. classes of graphs where forbidden induced subgraphs contain arbitrarily large chordless cycles. 
Clearly, in this case the set of  forbidden induced subgraphs must be infinite. A typical example of this type deals with classes of bounded chordality, i.e.
classes excluding {\it all} chordless cycle of length at least $k$ for a constant $k$. Without a restriction to bounded degree graphs a solution of  this type is
known only for $k=4$, i.e. for chordal graphs \cite{chordal}, and is unknown for larger values of $k$. Together with the restriction to bounded degree graphs bounded chordality 
implies bounded tree-width \cite{chordality} and hence polynomial-time solvability of the maximum independent set problem. 
In other words, the problem can be solved for $(C_k,C_{k+1},\ldots)$-free graphs of bounded vertex degree for each value of $k\ge 3$.

An {\it apple} $A_k$, $k\ge 4$, is a graph formed of a chordless cycle $C_k$ and an additional vertex, called the {\it stem}, which has exactly one neighbour on the cycle $C_k$. 
The class of $(A_4,A_5,\ldots)$-free graphs generalizes both chordal graphs and claw-free graphs, and  a solution for the maximum independent set problem in this class was 
presented in \cite{apple}. In case of bounded degree graphs this solution can be extended to graphs without {\it large} apples, i.e. to $(A_k,A_{k+1},\ldots)$-free graphs of bounded vertex degree for any 
fixed value of $k$ \cite{large-apple}.

\medskip
Generalizing both the subcubic graphs without large apples and $S_{2,2,2}$-free subcubic graphs, in the present paper we prove polynomial-time solvability of the maximum independent set 
problem for subcubic graphs excluding large apples with a long stem. An {\it apple with a long stem} $A^*_k$ is obtained from an apple $A_k$ by adding one more vertex which is adjacent to the stem of $A_k$ only.  
We show that for any fixed value of $k$, the maximum independent set problem in the class of $(A^*_k,A^*_{k+1},\ldots)$-free subcubic graphs can be solved in polynomial time.
Observe that this class contains all $S_{2,p,p}$-free subcubic graphs for any fixed $p<k$ and hence our result brings us much closer to the proof of Conjecture~\ref{con:1}.


\section{Preliminaries}
\label{sec:pre}
All graphs in this paper are simple, i.e. indirected, without loops and multiple edges. The vertex set and the edge set of a graph $G$ is denoted by $V(G)$ and $E(G)$, respectively. 
The {\it neighbourhood} $N(v)$ of a vertex $v\in V(G)$ is the set of vertices of $G$ adjacent  to $v$.
The {\it degree} of $v\in V(G)$ is the number of its neighbours, i.e. $|N(v)|$. 
As usual, $P_n$ and $C_n$ denote a chordless path and a chordless cycle with $n$ vertices, respectively,

A subgraph of $G$ induced by a subset $U\subseteq V(G)$ is denoted $G[U]$. If $G$ contains no induced subgraphs isomorphic to a graph $H$, we say that $G$ is $H$-free.

\medskip
{\it Outline of the proof}. To prove polynomial-time solvability of the maximum independent set problem in the class of $(A^*_k,A^*_{k+1},\ldots)$-free subcubic graphs,
\begin{itemize}
\item[1.] We start by checking if the input graph $G$ has an induced copy of $S_{2,2,2}$. If $G$ is $S_{2,2,2}$-free, then the problem can be solved for $G$ in polynomial time \cite{subcubic}.
Otherwise, we proceed to checking whether $G$ has a induced cycle of length at
least $p=300k$. This can be done in polynomial time, as shown in
Lemma~\ref{lem:cycles} below. If $G$ does not contain induced cycles of length
at least $p$, then the tree-width of $G$ is bounded by a function of $k$
\cite{chordality} and hence the problem can be solved in polynomial time for
$G$.

\item[2.]
If $G$ contains an induced copy of $S_{2,2,2}$ and a large induced cycle $C$, then in the absence of large induced apples with long stems it must contain a large extended
cycle $C^*$, which is a graph obtained from $C$ by adding two vertices that create a $C_6$ together with four consecutive vertices of $C$ (see Figure~\ref{fig:extended} in Section~\ref{sec:destroy}).  
This is proved in Section~\ref{sec:to}. An important feature of this proof is the assumption that the input graph $G$ is connected and has no separating cliques, i.e. cliques whose 
removal disconnects the graph. A polynomial-time reduction of the maximum independent set problem to graphs without separating cliques can be found in \cite{clique-separator-1,clique-separator-2}.
\item[3.] Finally, in Section~\ref{sec:destroy} we destroy a large extended cycle by means of various local reductions. 
Each of them transforms $G$ into a smaller graph $G'$ in the same class with a fixed difference $\alpha(G)-\alpha(G')$.
The set of reductions is described in Section~\ref{sec:reductions} and their application to a graph $G$ containing a large extended cycle is described in Section~\ref{sec:analysis}. 
By destroying the large extended cycle $C^*$, we  destroy either the  cycle $C$ or the induced copy of $S_{2,2,2}$ (or both) and return to Step 1 to check if there are other copies of a large induced cycle or
an  induced $S_{2,2,2}$.    
\end{itemize}

Before proceeding, let us state the Lemma that we used in part 1 of the above outline.

\begin{lemma}\label{lem:cycles}

For each $p$ there is an algorithm running in time $n^{O(p)}$ which decides if a given $n$-vertex graph contains an induced cycle of length at least $p$.

\end{lemma}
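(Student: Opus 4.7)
The plan is to reduce the problem to a brute-force enumeration of a candidate induced subpath of $p$ consecutive vertices on the sought cycle, combined with a single shortest-path computation in a modified graph to recover the rest of the cycle.

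Concretely, I would enumerate all ordered $p$-tuples $(v_1,\ldots,v_p)$ of distinct vertices of $G$; there are at most $n^p$ of them, and each can be tested in $O(p^2)$ time for being an induced path $P$ in $G$. For each surviving tuple, I would form the auxiliary graph $G'$ obtained from $G$ by deleting the internal vertices $v_2,\ldots,v_{p-1}$ of $P$ together with every neighbour of $\{v_2,\ldots,v_{p-1}\}$ other than $v_1$ and $v_p$. The algorithm declares success if $v_1v_p \in E(G)$ (in which case $P$ already closes into an induced $C_p$), or if $v_1$ and $v_p$ lie in the same component of $G'$; in the latter case, taking a shortest $v_1$--$v_p$ path $Q$ in $G'$ and glueing it to $P$ should produce the desired induced cycle of length at least $p+1$. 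If no tuple succeeds, the algorithm outputs ``no''.

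Soundness will rest on three observations about a shortest path $Q$ in $G'$: it is chordless in $G'$, hence in $G$; no interior vertex of $Q$ is adjacent to any interior vertex of $P$, by construction of $G'$; and no chord from $v_1$ or $v_p$ to an interior vertex of $Q$ can exist, since otherwise $Q$ would not be a shortest path. Combining these three, $P \cup Q$ is chordless in $G$ and therefore is an induced cycle of length $\geq p$. Completeness follows because any induced cycle $C$ of length $\ell \ge p$ in $G$ furnishes $p$ consecutive vertices that form a valid induced path $P$, and the complementary arc of $C$ has all its interior vertices in $G'$ (as $C$ is induced), so the connectivity test will succeed.

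The running time bound is straightforward: $n^p$ tuples with polynomial work per tuple, giving $n^{O(p)}$ in total. I do not expect a substantial obstacle here; the only subtle point is ruling out chords between the endpoints $v_1,v_p$ of $P$ and the interior of the closing path, and this is handled automatically by the choice of $Q$ as a shortest path in $G'$.
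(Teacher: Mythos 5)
Your approach is essentially the paper's: guess a long induced subpath of the hypothetical cycle, delete the closed neighbourhood of its interior, test whether the two endpoints remain connected, and close the cycle with a shortest (hence induced) path. There is, however, one off-by-one wrinkle worth fixing. You enumerate $p$-tuples and filter for those that induce a \emph{path}, which rejects precisely the tuples with $v_1v_p\in E(G)$ --- yet that is the case your first success condition relies on to detect an induced cycle of length exactly $p$. As literally written, a graph whose longest induced cycle has length exactly $p$ (e.g.\ $G=C_p$ itself) is answered ``no'': the $p$ cycle vertices induce a $C_p$, not a path, so that tuple never survives the filter, and any tuple that does survive has $v_1v_p\notin E(G)$ and can only certify a cycle of length at least $p+1$. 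The fix is immediate --- either relax the filter to permit the single extra edge $v_1v_p$, or do as the paper does and enumerate induced paths on $p-1$ vertices, whose endpoints are automatically non-adjacent, so the closing path always contributes at least one new vertex and the resulting cycle has length at least $p$. The remaining ingredients (soundness because a shortest path is induced and the interior of $P$ has no surviving neighbours in $G'$; completeness because the complementary arc of an induced cycle survives in $G'$; the $n^{O(p)}$ count) coincide with the paper's argument.
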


\begin{proof}

We repeat the following procedure for every induced path $P$ with $p-1$
vertices. Let $u,v$ be the two endpoints of $P$. We delete from the graph all
vertices of $P$ except $u,v$ and all vertices of $V\setminus P$ that have a
neighbor in $P\setminus \{u,v\}$. If the resulting graph has a path from $u$ to
$v$ then a shortest such path together with $P$ gives an induced cycle of
length at least $p$ in $G$. Conversely, if there exists an induced cycle in $G$
that contains $P$ then there must exist a path from $u$ to $v$ in the resulting
graph. Since the total number of induced paths on $p-1$ vertices is at most
$n^{p-1}$ we get the promised running time.  \end{proof}


\section{From large cycles to extended large cycles}
\label{sec:to}
We recall that $C^*$ denotes an {\it extended cycle}, i.e. the graph obtained from a cycle $C$ by adding two vertices that create a $C_6$ together with 
four consecutive vertices of $C$ (see Figure~\ref{fig:extended} in Section~\ref{sec:destroy}). Also, $A^*_p$ denotes an {\it apple with a long stem}, where 
$p$ stands for the size of the cycle in the apple.  
An apple with a long stem consisting of a cycle $C$ and two stem vertices $x,y$ will be denoted $C_{x,y}$.

The main goal of this section is to show that if $G$ contains a large induced cycle and an induced copy of $S_{2,2,2}$, 
then it contains  either a large induced extended cycle or a large induced apple with a long stem. 
This will be shown in two steps in Lemmas~\ref{lem:b1} and~\ref{lem:b2} 

\begin{lemma}\label{lem:b1}
Let $G$ be a subcubic graph without separating cliques.
If $G$ has an induced cycle $C$ of length $p$ and an induced copy of $S_{2,2,2}$,  
then $G$ has an induced cycle of length at least $p/12$ containing the center of an induced $S_{2,2,2}$. 
\end{lemma}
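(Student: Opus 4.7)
The plan is to find an induced cycle of length at least $p/12$ through the center $c$ of the given induced $S_{2,2,2}$; since $c$ is already the center of an induced $S_{2,2,2}$ in $G$, this is enough to establish the conclusion.

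If $c\in V(C)$ then $C$ itself is the desired induced cycle. Otherwise $c\notin V(C)$. The no-separating-clique hypothesis yields that $G$ is $2$-connected (a cut vertex would be a separating $K_1$), so by Menger's theorem there exist two internally vertex-disjoint paths $P_1, P_2$ from $c$ to $V(C)$ with distinct endpoints $v_1, v_2\in V(C)$ and with all internal vertices outside $V(C)$. I would choose $P_1, P_2$ shortest (so each is induced) subject to minimizing $|V(P_1)|+|V(P_2)|$; a standard exchange argument then shows that there are no chords between $P_1$ and $P_2$, since such a chord would give a strictly shorter pair of internally vertex-disjoint paths from $c$ to $V(C)$. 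Letting $Q$ be the longer of the two arcs of $C$ from $v_1$ to $v_2$, we have $|Q|\geq p/2$, and the concatenation $P_1\cdot Q\cdot \overline{P_2}$ is a cycle through $c$ of length at least $p/2$, possibly with chords from $P_1\cup P_2$ to $Q$.

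The core step is extracting from this structure an induced cycle through $c$ of length at least $p/12$. The subcubic bound is crucial here: every internal vertex of $P_1\cup P_2$ has degree at most $3$ with two edges used on its path, hence at most one neighbor outside the path. Consequently the number of vertices of $V(Q)$ adjacent to $(P_1\cup P_2)\setminus\{v_1,v_2\}$ is bounded by $|V(P_1)|+|V(P_2)|-3$, plus possibly one chord contributed by $c$ itself. These chord-endpoints on $V(Q)$ partition $Q$ into sub-arcs, the longest of which, $Q^*$, has length at least $p/(2(|V(P_1)|+|V(P_2)|-2))$. I would then argue in two regimes: if $|V(P_1)|+|V(P_2)|\leq 8$, then $|Q^*|\geq p/12$, and I close $Q^*$ back to $c$ through subpaths of $P_1, P_2$ (entering $Q^*$ via at most one chord edge at each endpoint) to obtain an induced cycle through $c$ of length at least $p/12$; if instead $|V(P_1)|+|V(P_2)|>8$, then the $v_1$-to-$v_2$ path $P_1\cup P_2$ through $c$ is already long, and combining it with a suitably chosen chord-free sub-arc of $Q$ yields an induced cycle through $c$ of length at least $p/12$.

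The main obstacle will be verifying inducedness of the closed walk constructed in each regime. The absence of $P_1$--$P_2$ chords (from minimality) together with the chord-freeness of $Q^*$ eliminates most potential chords, but one must also check that the chord edges used at the endpoints of $Q^*$ do not themselves create shortcuts that bypass $c$; the subcubic bound at each relevant vertex makes this a routine local check. The constant $1/12$ reflects the slack needed to absorb the constant number of chord-endpoints that $Q$ may be cut into before the regime switches.
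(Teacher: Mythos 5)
There is a genuine gap, and it starts with the goal you set yourself. The lemma only asks for a long induced cycle through the center of \emph{some} induced $S_{2,2,2}$, and the paper's proof uses this freedom essentially: in several configurations it abandons the original center $a_0$ and instead exhibits a \emph{new} $S_{2,2,2}$ whose center already lies on $C$ (e.g.\ when the attaching vertex has a unique neighbour on $C$, that neighbour becomes the center of a fresh $S_{2,2,2}$ and $C$ itself is the desired cycle). By insisting on routing a long induced cycle through the original center $c$, you are proving a strictly stronger statement, and your argument for it does not close. Concretely, let $m=|V(P_1)|+|V(P_2)|$. The chords from internal vertices of $P_1\cup P_2$ cut $Q$ into up to $m-1$ sub-arcs, so the longest chord-free sub-arc has guaranteed length only about $p/(2(m-1))$; this is at least $p/12$ only when $m\le 7$ or so. Your second regime, ``$m>8$, so the $v_1$--$v_2$ path through $c$ is already long,'' is a non sequitur: $m>8$ gives a path with more than $8$ vertices, not one of length $\ge p/12$. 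For $m$ anywhere between about $9$ and $p/12$ (e.g.\ two attaching paths of $20$ vertices each, every internal vertex carrying one chord to $Q$), neither regime produces a cycle of length $p/12$ through $c$, and it is not clear such a cycle exists at all in those configurations.

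Two further steps are also not routine as claimed. First, the exchange argument for eliminating chords between $P_1$ and $P_2$ is not standard: rerouting $P_1$ through a chord $uw$ into $P_2$ destroys internal disjointness, so minimality of $|V(P_1)|+|V(P_2)|$ does not obviously forbid such chords. Second, when you close $Q^*$ back ``through $c$'', both endpoints of $Q^*$ may attach to internal vertices of the \emph{same} path $P_i$, in which case the induced cycle you obtain bypasses $c$ entirely. The paper avoids all of this by a case analysis on the distance from $a_0$ to $C$, using the no-separating-clique hypothesis to route around the triangles that arise when an attaching vertex has two consecutive neighbours on $C$ (the ``quasi-chord'' device), and by repeatedly relabelling which $S_{2,2,2}$ it is working with. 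If you want to salvage your approach, you would need either a mechanism for handling the middle range of $m$, or, more in the spirit of the lemma, a way to recognize that when your construction fails, some vertex of $C$ (or of a long cycle near $C$) is itself the center of an induced $S_{2,2,2}$.
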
  

\begin{proof}
Denote a copy of an induced $S_{2,2,2}$ by $H$.
We denote the vertices of $H$ by $a_0$ (the center),
$a_1,b_1,c_1$ (vertices of degree 2), $a_2,b_2,c_2$ (vertices of degree 1 adjacent to $a_1,b_1,c_1$, respectively). 
If vertex $a_0$ belongs to $C$, there is nothing to prove. We split the rest of the proof into cases depending on the distance from vertex $a_0$ to $C$. 

\medskip

{\it Case} 1. Assume first that vertex $a_0$ is of distance $1$ from $C$. We
may suppose that $a_1\in C$. Then $a_2$ also belongs to $C$ due to the degree
constraint. If $b_1$ or $c_1$ belong to $C$ then it is easy to find an induced
cycle of length at least $p/3$ containing $a_0$. So, we assume that $b_1,c_1$
do not belong to $C$.  If $b_1$ or $c_1$ has no neighbours on $C$, then $a_1$
is the center of an induced $S_{2,2,2}$ belonging to $C$ and the result holds.
We therefore assume that both $b_1,c_1$ have neighbours in $C$.  At least one
of these neighbours is not adjacent to $a_1$ (since one of the neighbours of
$a_1$ in $C$ is $a_2$), say without loss of generality $x\in C$ is a neighbour
of $b_1$.  Then $x$ is not connected to $a_0$, due to the degree constraint. We
therefore form a long induced cycle by using $a_1a_0b_1x$ and the longer of the
two paths connects $x$ and $a_1$ in $C$.

\medskip {\it Case} 2.  Assume now that $a_0$ is of distance $2$ from $C$ and
that a shortest path from $a_0$ to $C$ goes through $a_1\not\in C$. Let $x$ be
the neighbour of $a_1$ on $C$. If $a_1$ has no other neighbour on $C$ except
for $x$, then $x$ is the center of an induced $S_{2,2,2}$ (together with
$a_0,a_1$). If $a_1$ has two non-consecutive neighbours on $C$, then $G$ has an
induced cycle of length at least $p/2$ containing $a_1$, and hence, according
to Case 1, $G$ has an induced cycle of length at least $p/6$ containing $a_0$.
If $a_1$ has two consecutive neighbours on $C$, then these neighbours together
with $a_1$ create a clique. Therefore, there must exist a path connecting $a_0$
to $C$ and avoiding this clique. But then $G$ has an induced cycle of length at
least $p/2$ containing $a_0$. 

\medskip
{\it Case} 3. Assume that $a_0$ is of distance more than 2 from $C$. To prove the result in this case, we use the notion of a quasi-chord defined as follows. 
A {\it quasi-chord for $C$} is a chordless path $P=(p_1,\ldots,p_s)$ such that each of $p_1$ and $p_s$ has two consecutive neighbours on $C$, while the other vertices of $P$ 
have no neighbours on $C$. Note that
a quasi-chord $P$ splits $C$ into two parts one of which together 
with $P$ creates an induced cycle of length at least $p/2$.

\begin{itemize}
\item[(1)] First, let us show that if $a_0$ is of distance more than 2 from $C$, then there 
exists at least one quasi-chord for $C$ with the property that the distance between $a_0$ and this quasi-chord 
is strictly less than the distance between $a_0$ and $C$. 
Since $G$ is connected, there must exist a path connecting $a_0$ to $C$. Let $P'=(x_1,\ldots,x_p)$
be a shortest path between $a_0$ and $C$ with $x_1=a_0$ and with $x_p$ 
having a neighbour in $C$. If $x_p$ has a unique neighbour on $C$, then this neighbour is the center of 
an induced $S_{2,2,2}$, in which case we are done.  If $x_p$ has two
non-consecutive neighbours on $C$, then $x_p$ is the center of an induced
$S_{2,2,2}$, in which case we are in conditions of Case 1.

Therefore, $x_p$ has two consecutive neighbours on $C$, say $c_1$ and $c_2$.
Then $x_p,c_1,c_2$ is a clique and hence there must exist a path connecting
$a_0$ to $C$ and avoiding this clique.  Let $P''=(y_1,\ldots,y_t)$ be a
shortest path of this type with $y_1=a_0$ and with $y_t$ having a neighbour in
$C$.  Then by analogy with $P'$ we conclude that $y_t$ has two consecutive
neighbours on $C$. We observe that the two paths $P'$ and $P''$ may have common
vertices different from $a_0$. Also, there may exist chords (edges) between
vertices of these paths.  However, we can always find a chordless path $P$
connecting $x_p$ to $y_t$, which uses only the vertices of $P'$ and $P''$ by
considering a shortest path in $G[V(P') \cup V(P'')]$, the graph induced by the
vertices of the two paths.  This path $P$ is a quasi-chord for $C$.  Moreover,
$P$ is closer to $a_0$ than $C$, since $P$ contains $x_p$ and by definition the
distance between $a_0$ and $x_p$ is exactly one less than the distance between
$a_0$ and $C$.  
 
\item[(2)] Now let us show that there exists a quasi-chord for $C$ which is of distance at most $2$ from $a_0$. To this end, let us denote by $P=(p_1,\ldots,p_s)$ 
a quasi-chord for $C$ which is as close to $a_0$ as possible. 

Assume $a_0$ is of distance more than $2$ from $P$. Then we consider a shortest path $P'=(x_1,\ldots,x_p)$ connecting $a_0=x_1$ to $P$ with $x_p$ having a neighbour in $P$.
Since $P$ is closer to $a_0$ than $C$, no vertex of $P'$ belongs 
to $C$ or has a neighbour on $C$. Also, since $P'$ is shortest, no vertex of $P'$ has a 
neighbour on $P$, except for $x_p$. We recall that the quasi-chord $P$ splits $C$ into two parts one of which
together with $P$ creates an induced cycle $C'$ of length at least $p/2$. 
To avoid an induced $S_{2,2,2}$ whose center is of distance at most 1 from $C'$ 
(in which case we are in conditions of Case 1 with respect to the induced cycle $C'$), 
we conclude that $x_p$ has two consecutive neighbours on $P$, say $p_i$ and $p_{i+1}$. 
Using the fact $G$ has no separating clique, we find one more chordless path $P''=(y_1,\ldots,y_t)$
connecting $a_0=y_1$ to $C\cup P$ and avoiding the clique $\{x_p,p_i,p_{i+1}\}$. As before, we assume that no vertex of $P''$ except for $y_t$ has a neighbour on $C\cup P$.

Let $P^*$ be a chordless path connecting $x_p$ to $y_t$ and consisting of vertices of $P'$ and $P''$ only. If $y_t$ has no neighbours on $C$, then a part of $P$ can be replaced 
by $P^*$ creating a quasi-chord for $C$, which is closer to $a_0$ than $P$, since it contain $x_p$. This contradicts the choice of $P$. Therefore, $y_t$ has a neighbour on $C$.
To avoid an induced $S_{2,2,2}$ whose center is of distance at most 1 from $C$, we conclude that $y_t$ has two consecutive neighbours on $C$. But then $\{p_1,\ldots,p_i\}\cup P^*$
is  a quasi-chord for $C$, which is closer to $a_0$ than $P$, since it contain $x_p$. This final contradiction shows that the distance from $a_0$ to $P$ is at most 2.
\end{itemize}

From Claim (2) we conclude that $G$ has an induced cycle of length at least $p/2$, which is of distance at most 2 from $a_0$. Therefore, according to Cases 1 and 2, 
$G$ has an induced cycle of length at least $p/12$ containing the center of an induced $S_{2,2,2}$.
\end{proof}

\begin{lemma}\label{lem:b2}
Let $G$ be a subcubic graph without separating cliques. 
If $G$ has an induced cycle $C$ of length $p$ containing the center of an induced $S_{2,2,2}$, 
then $G$ has an induced extended cycle $C^*_t$ or an induced apple with a long stem $A^*_t$ with $t\ge p/8$. 
\end{lemma}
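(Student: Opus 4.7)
Let $v\in C$ be the center of the induced $S_{2,2,2}$. Since $v$ has degree $3$ and two of its neighbours lie on the induced cycle $C$, its third neighbour $c_1$ lies off $C$; let $c_2$ denote the continuation of the $c_1$-branch of the $S_{2,2,2}$, so $vc_1c_2$ is an induced path and, by inducedness of $S_{2,2,2}$, one has $c_2\neq v$, $c_2\not\sim v$, and neither $c_1$ nor $c_2$ is adjacent to the two cycle-neighbours of $v$. The plan is to read off the structure of the attachments of $\{c_1,c_2\}$ to $C$: set $A=\{v\}\cup\bigl((N(c_1)\cup N(c_2))\cap V(C)\bigr)$. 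Subcubicity forces $|A|\leq 4$, since $c_1$ contributes $v$ and at most one other $C$-vertex, and $c_2$ contributes at most two; hence $A$ partitions $C$ into at most four arcs whose longest, $L$, has length $|L|\geq p/4$.

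If $|A|=1$, then $V(C)\cup\{c_1,c_2\}$ induces exactly $A^*_p$ (the listed non-adjacencies make $c_1$ a stem with unique $C$-neighbour $v$, and $c_2$ a long-stem vertex), and we are done. Otherwise $|A|\geq 2$ with $L$ having endpoints $a,b\in A$, and I close $L$ into an induced cycle $C'$ by attaching one of the short paths $a\text{-}c_1\text{-}b$, $a\text{-}c_2\text{-}b$, or $a\text{-}c_1\text{-}c_2\text{-}b$ (up to reversal), whichever the $A$-adjacencies allow. Such a closure has length $2$ or $3$, giving $|C'|\geq p/4+2$. Inducedness of $C'$ follows from three observations: $L$ is chordless as a sub-path of $C$; the inserted short path is chordless by the $S_{2,2,2}$ non-adjacencies together with $v\not\sim x$ for every $x\in A\setminus\{v\}$ (the cycle-neighbours of $v$ being excluded from $A$, so $v$ and any other element of $A$ are at cycle-distance at least~$2$); and no chord connects the interiors, since interior vertices of $L$ have no neighbour in $\{c_1,c_2\}$ by the maximality property defining $L$.

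The finishing step is to augment $C'$ (or in some sub-cases $C$ itself) to an induced $A^*$ or $C^*$ with two additional off-cycle vertices. If the short closure uses only one of $c_1,c_2$, the other lies off $C'$ with unique $C'$-neighbour, so it plays the role of the stem of $A^*_{|C'|}$; the long-stem vertex is supplied by a further off-cycle neighbour of $c_1$ or $c_2$, or, if none exists, by one of the $S_{2,2,2}$-pendants $a_2,b_2$ of the other two branches — the absence of separating cliques guarantees that whenever a local candidate is obstructed the structure extends off $C$. If the short closure uses both $c_1,c_2$, the short arc of $C$ opposite $L$ has length $d=p-|L|$; when $d=3$ its two interior cycle-vertices, together with $c_1,c_2$, already exhibit $C^*_p$ on $C$ itself (the non-edges are checked using the $S_{2,2,2}$ constraints); when $d=2$ the unique interior vertex has degree $3$ with an off-$C$ pendant supplied by $S_{2,2,2}$, which gives the stem and long-stem for $A^*_{|C'|}$; and when $d\geq 4$ one recurses or halves $L$ further. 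The factor $8=4\cdot 2$ in $t\geq p/8$ arises from pigeonholing over the (at most) four arcs of $C$ and then possibly losing one further factor of $2$ inside the augmentation.

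The main obstacle is this last augmentation: in each sub-case one must exhibit two specific off-$C'$ vertices whose induced attachment pattern is exactly that of $A^*$ or $C^*$, with no stray chords. The key facts are subcubicity (bounding $|A|$, controlling chords, and limiting how many forbidden adjacencies a given vertex can create), inducedness of $C$ and $S_{2,2,2}$ (which rules out explicitly many potential chords, in particular forcing pendants off $C$ whenever their would-be cycle position clashes with a $\{c_1,c_2\}$-attachment), and the no-separating-clique hypothesis (invoked to guarantee alternative continuations off the cycle whenever a natural stem candidate is blocked by a clique in $G$).
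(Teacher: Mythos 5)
Your opening moves are sound and partially coincide with the paper's: the observation that if neither $c_1$ nor $c_2$ attaches to $C$ anywhere besides $v$ then $V(C)\cup\{c_1,c_2\}$ is an induced $A^*_p$ is exactly the paper's starting point, and pigeonholing the attachment points to extract a long induced cycle through $c_1$ and/or $c_2$ is a reasonable plan. (Two small slips there: your bound $|A|\le 4$ silently assumes $c_2\notin C$, and the chordlessness of the closure $a\hbox{-}c_1\hbox{-}c_2\hbox{-}b$ needs the extra check that $c_1\not\sim b$ and $c_2\not\sim a$, which your "whichever the $A$-adjacencies allow" only partially handles. Both are repairable.)

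The genuine gap is the step you yourself flag as "the main obstacle": upgrading the long induced cycle to an induced $C^*_t$ or $A^*_t$. This is not a finishing touch -- it is essentially the entire content of the lemma, and your treatment of it consists of assertions rather than arguments. Concretely: (i) when the closure uses only one of $c_1,c_2$, the other need not have a unique neighbour on $C'$ (its third neighbour lies in $A$ and may be an endpoint of $L$), so it is not automatically a stem; (ii) the claim that "the absence of separating cliques guarantees that whenever a local candidate is obstructed the structure extends off $C$" is not a proof -- the paper needs a specific argument (its Claim about vertex $6$) just to show that the pendant of the off-cycle branch attaches to $C$ within distance $2$, and then a further case analysis to show that each possible attachment yields either a $C^*$ or an $A^*$ without stray chords; (iii) your sub-cases $d=2,3,\ge 4$ are not well defined when $|A|\ge 3$ (there is no single "arc opposite $L$"), and "one recurses or halves $L$ further" specifies no recursion; (iv) the factor $8$ is asserted to come from "$4\cdot 2$" but the extra factor of $2$ is never located in the argument. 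The paper resolves all of this by a different organizing device: it classifies cycles by their $H$-value (the number of $S_{2,2,2}$-edges lying on $C$), proves that an $H$-value of $2$ or $3$ can be traded for a cycle of length at least $p/2$ with larger $H$-value (each trade costing a factor $2$, whence $p/8$), and then in the $H$-value-$4$ case explicitly constructs the $C^*$ or $A^*$ using the controlled positions of the pendant vertices $4,5,6$. Your proposal never engages with where the pendants of the two on-cycle branches sit, which is precisely the information the construction needs; as written, the proof does not go through.
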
  

\begin{proof}
Denote the induced $S_{2,2,2}$ with the center on $C$ by $H$.
We also denote the vertices of $H$ by $0$ (the center),
$1,2,3$ (vertices of degree 2), $4,5,6$ (vertices of degree 1 adjacent to 1,2,3, respectively) and assume 
without loss of generality that $1$ and $2$ belong to the cycle (together with 0), while $3$ does not belong to $C$. 
Finally, we denote the two vertices of $C$ following vertex $1$ by $a$ and $b$
(possibly $a=4$), and the two vertices of $C$ following vertex $2$ by $c$ and $d$ (possibly $c=5$).
Let us call the number of edges of $H$ contained in $C$ the $H$-value of $C$. Clearly, this value cannot be larger than 4
and due to degree constraint it cannot be smaller than 2. 

The following two claims will be useful in the proof of the lemma. The proof of the first claim is evident.
\begin{claim}\label{claim:A}
If a vertex $x \notin C$ 
\begin{itemize}
\item has two neighbours on $C$, then the smaller of the two cycles formed by $x$ and the two parts of $C$ has size at most 5,
\item has three neighbours on  $C$, then  the two smaller cycles (of the three formed by $x$ and the three parts of $C$) 
have size at most 4,
\end{itemize}
since otherwise an induced $A^*_{t}$ with $t\ge p/3$ can be easily found.
\end{claim}

\begin{claim}\label{claim:B}
Either vertex $6$ belongs to $C$ or $6$ has a neighbour on $C$ or $6$ is adjacent to a vertex $x$ that has a neighbour on $C$.
\end{claim}

\begin{proof}
Assume vertex $6$ neither belongs to $C$ nor has a neighbour on $C$. Then vertex $3$ must have a neighbour on $C$,
for otherwise $C \cup \{3,6\}$ is an induced $A^*_{p}$. By Claim~\ref{claim:A} (and the fact that $6\not\in C$), 
vertex $3$ cannot have neighbours outside of $\{a,b,c,d\}$. Also, since $6\not\in C$,
$3$ cannot have neighbours both in $\{a,b\}$ and in $\{c,d\}$. 
Therefore, we may assume without loss of generality that $3$ has a neighbour in $\{a,b\}$ and
does not have neighbours in $\{c,d\}$.

Since $G$ has no separating cliques, vertex $6$ must have a neighbour $x$ different from $3$. Then $x$ has a neighbour on $C$, 
for otherwise $C' \cup \{6,x\}$ is an induced $A^*_{p-1}$, where $C'$ is the cycle formed by $3$ and the long part of $C$.    
\end{proof}

\medskip
First, we prove the lemma assuming that the $H$-value of $C$ is 4, i.e. $4=a$ and $5=c$, 
and show that in this case $G$ has an induced $C^*_t$ or an induced $A^*_t$ with $t\ge p/2$. 
\begin{itemize}
\item[(1.1)] If vertex $3$ is adjacent neither to $b$ nor to $d$, then $3$ has no other neighbours on $C$ (except for $0$) by Claim~\ref{claim:A}.
In particular, $6\not\in C$. But then $6$ has a neighbour on $C$, since otherwise $C_{3,6}$ is an induced $A^*_{p}$.
If $6$ has two neighbours on $C$ or if $6$ has a single neighbour different from $b$ and $d$, then $G$ contains an induced 
$A^*_t$ with $t\ge p/2$. If $b$ (or $d$) is the only neighbour of $6$ on $C$, 
then $C\cup \{3,6\}$ is an induced $C^*_p$.

\item[(1.2)] Now assume without loss of generality that $3$ is adjacent to $b$. 
Then $3$ has no other neighbours on $C$ (except for $0$ and $b$) by Claim~\ref{claim:A}.
In particular, $6\not\in C$. If $6$ has neighbors on $C$, we denote them by $y$ and $z$ (possibly $y=z$).
Otherwise, by Claim \ref{claim:B}, $6$ is adjacent to a vertex $x$ which has neighbours on $C$.
In this case, we denote the neighbours of $x$ on $C$ by $y$ and $z$ (again, possibly $y=z$).
The neighbours of $3$ on $C$ split the cycle into two paths: the short one of length $4$ and a longer one
of length $p-2$. Let us denote the longer path by $P$.
\begin{itemize}
\item[(1.2.1)] Suppose first that neither $y$ nor $z$ belongs to $\{1,4\}$. Then
$y$ and $z$ split $P$ into at most 3 paths giving rise to at most 3 cycles.
It is not difficult to see that each of these cycles is part of an apple with a long stem,
and one of these cycles has length at least $p/2$ (remember that the distance between $y$ and $z$ along $C$ cannot be larger
than $3$ by Claim~\ref{claim:A}), i.e. $G$ contains an induced $A^*_t$ with $t\ge p/2$.

\item[(1.2.2)] Now assume without loss of generality that $y\in\{1,4\}$. Then,
$y$ is a neighbour of $x$, since $6$ has no neighbours among $\{1,4\}$.
Therefore, $6$ has no neighbours in $C$. We first observe that we may assume
without loss of generality that $y=1$. This is because the vertices
$\{1,3,4,6,b\}$ together with the two other vertices at distance at most two
from $b$ in $C$ induce an $S_{2,2,2}$ with center $b$ and $H$-value 4. In case
$y=4$,  by exchanging the roles of $b$ with $0$, $1$ with $4$, and $\{2,5\}$
with the two other vertices at distance at most two from $b$ in $C$ we have
that $x$ (the neighbour of $6$ with connection to $C$) is adjacent to $1$.

Furthermore, we observe that $z$ (the second neighbour of $x$ in $C$) must
exist and belong in $C\setminus\{0,1,4,b\}$: otherwise
$(C\setminus\{1,4\})\cup\{3,6,x\}$ induce an $A^*_{p-2}$.

To summarize, we have that the neighbours of $3$ in $C$ are $\{0,b\}$;
$6\not\in C$ and $6$ has no neighbours in $C$; and $6$ has a neighbour $x\neq3$
which is connected to $1$ and to a vertex $z\in C\setminus\{0,1,4,b\}$. Now, by
Claim~\ref{claim:A}, $z$ can be at distance at most $3$ from $1$ in $C$.
Therefore, either $z\in\{2,5\}$ or $z$ is the neighbor of $b$ in $C$ which is
not $4$. In the latter case we have that $(C\setminus\{4\})\cup\{3,x\}$ induce
a $C^*_{p-1}$, formed by the long induced path from $0$ to $z$ on $C$, plus the
two paths $z,b,3,0$ and $z,x,1,0$. In the former case we see that an induced
$C^*_t$ for $t\ge p-1$ is formed by the long path from $b$ to $z$ on $C$, the
vertex $x$ (which is a neighbour of $z$) and the two paths $b,3,6,x$ and
$b,4,1,x$.

\end{itemize} \end{itemize}

To complete the proof of the lemma, we prove two more claims that eliminate the cases when the $H$-value of $C$ is 3 or 2.

\begin{claim}\label{claim:D}
If the $H$-value of $C$ is $3$ with $4\ne a$, then $G$ contains either a cycle of length at least $p/2$ with $H$-value $4$ or an induced $A^*_t$ with $t\ge p/2$ or an induced $C^*_p$.
\end{claim}
\begin{proof}
We prove the result through a series of claims.
\begin{itemize}

\item {\it $a$ is adjacent to $3$}. Indeed, if $a$ is adjacent neither to $3$
nor to $6$, then by replacing $4$ with $a$ we obtain an induced $S_{2,2,2}$
containing 4 edges in $C$.  Suppose then that $a$ is adjacent to $6$ but not
$3$. We distinguish two cases: first, $b=6$, in which case the cycle induced by
$C\cup\{3\}\setminus\{1,a\}$ has length $p-1$, contains $0$ and has $H$-value
4; and second $6\not\in C$.  In this case, any other neighbour of $6$ on $C$
(if any) must be of distance at most 3 from $a$ (Claim~\ref{claim:A}),
therefore we obtain a cycle of length at least $p-2$ containing four edges
$(0,3),(3,6),(0,2),(2,5)$ of $H$.

\item {\it $6$ does not belong to $C$}. Indeed, if $6$ belongs to $C$, then it must be of distance at most 3 from $a$  (Claim~\ref{claim:A}), 
in which case we obtain a cycle of length at least $p-3$ containing four edges $(0,3),(3,6),(0,2),(2,5)$ of $H$.
\item {\it $4$ has no neighbours on $C$ different from $1$}. Indeed, if $4$ has more neighbours on $C$, then by Claim~\ref{claim:A} the farthest neighbour must be of distance 
at most 3 (if $|N(4)\cap C|=2$) or at most 4 (if $|N(4)\cap C|=3$) from $1$,  in which case we obtain a cycle of length at least $p-2$ containing four edges $(0,1),(1,4),(0,2),(2,5)$ of $H$.
\item {\it $4$ is  adjacent to a vertex $u\not\in C$ that has a neighbour on $C$}, since otherwise either $1$ is a separating clique or $C_{4,u}=A^*_{p}$.
\item {\it $u$ is adjacent to $2$}. Indeed, assume $u$ is not adjacent to $2$. By Claim~\ref{claim:A} the neighbours of $u$ on $C$ must be of distance at most 3 from each other. Therefore, $4,u$ together with a large part of $C$
create a cycle $C'$ of length at least $p/2$ containing either the edge $(0,1)$ or the edge $(1,a)$. If $(0,1)\in C'$, then the $H$-value of $C'$ is 4 (it contains the four edges $(0,1),(1,4),(0,2),(2,5)$).
If $(1,a)\in C'$, then either $C\cup \{4,u\}=C^*_p$ (if $5$ is the only neighbour of $u$ on $C$) or $C'_{0,2}=A^*_{t}$ with $t\ge p/2$ (if $u$ is adjacent to a vertex of $C$ different from $5$).

\item {\it no vertex of $N(4)\setminus\{u,1\}$ has a neighbour in $C$}, since
if there exists a $v\not\in\{u,1\}$ that is a neighbour of $4$ and has a
neighbour in $C$, then we have $v\not\in C$ and similarly to the previous claim
we would conclude that $v$ is adjacent to $2$. However, this is impossible due
to the degree constraint.

\end{itemize}
From Claim~\ref{claim:B} and the above discussion we conclude that either $6$ has neighbours on $C$ or  $6$ is adjacent to a vertex $x$ that has neighbours on $C$. These neighbours (of $6$ or of $x$) 
must be located on $C$ at distance at most 3 from each other (Claim~\ref{claim:A}) and must be different from $0,1,a$. 
Therefore, vertices $3,6$ (and possibly $x$) together with a long part of $C$ create a cycle $C'$ of length at least $p/2$
containing either the edge $(0,3)$ or the edge $(3,a)$.  In both cases,
$C'_{1,4}=A^*_{t}$ with $t\ge p/2$, unless $x$ is adjacent to $4$. But in the
latter case $x=u$ (where $u$ is the neighbour of $4$ adjacent to $2$) and hence
$C_{x,6}=A^*_p$.  \end{proof}

\begin{claim}\label{claim:C}
If the $H$-value of $C$ is $2$, then $G$ contains either a cycle of length at least $p/2$ with $H$-value $3$ or an induced $A^*_t$ with $t\ge p/2$ or an induced $C^*_{p-1}$.
\end{claim}
\begin{proof}

Let us first establish that both $a$ and $c$ must have a neighbor in $\{3,6\}$.
Suppose that $c$ does not have a neighbor in $\{3,6\}$. We now observe that $a$
must have a neighbor in $\{3,6\}$, for otherwise we exchange $\{4,5\}$ with
$\{a,c\}$ and obtain $H$-value $4$. Then, if $c$ is not adjacent to $4$, we can
exchange $c$ with $5$ in $H$ to increase the $H$-value of $C$.  Suppose then
that $c$ is adjacent to $4$. By Claim~\ref{claim:A} $c$ and $1$ are the only
neighbours of $4$ on $C$.  If $5$ has no neighbor in $C$ besides $2$, then we
have an induced  $A^*_{p-1}=C\cup\{4,5\}\setminus\{0\}$.  If $5$ is adjacent to
$a$, then by Claim~\ref{claim:A} $a$ is the only neighbour of $5$ on $C$, in
which case we obtain an induced $C^*_{p-1}=C\cup\{4,5\}\setminus\{0\}$.  If $5$
is not adjacent to $a$, then any neighbour of $5$ on $C$ must be of distance at
most $3$ from $c$. We therefore have an induced cycle of length at least $p-2$
that goes through $0,2,5$, then continues to the neighbor of $5$ on $C$ that is
as far as possible from $c$, and then goes on to $0$ by using vertices of $C$
and avoiding $c$. This induced cycle has a higher $H$-value, as it contains
edges $(0,1), (0,2)$ and $(2,5)$. 

Because of the above, both $a$ and $c$ must have a neighbor in $\{3,6\}$. They
cannot both be adjacent to $6$ by Claim~\ref{claim:A}, and they cannot both be
adjacent to $3$, for otherwise, without loss of generality, $6=a$ and hence $6$
is adjacent to $1$, which is impossible.  Therefore, we assume, without loss of
generality, that $a$ is adjacent to $6$ and $3$ is adjacent to $c$.

Now we look at vertex $4$. If $4$ has neighbours on $C$ different from $1$,
then these neighbours must be close to $1$ (by Claim~\ref{claim:A}), in which
case we find a cycle of length at least $p-2$ containing both edges $(0,1)$ and
$(1,4)$ and hence having $H$-value at least $3$. Note that we are also using
here the fact that $4$ is not connected to $a$ due to the degree constraint,
since $a$ is connected to $6$ and two vertices of $C$.

If $4$ has no other neighbours on $C$, then it must be adjacent to a vertex $u$
different from $1$ (since otherwise vertex $1$ is a separating clique) and $u$
must have a neighbour on $C$ (to avoid a large apple with a long stem
$C_{4,u}$). By Claim~\ref{claim:A}, the neighbours of $u$ on $C$ must be close
to each other, and again may not include $a,c$ due to the degree constraint and
the fact that $u\not\in\{3,6\}$, as $u$ is adjacent to $4$.  Therefore,
vertices $4$ and $u$ together with two parts of $C$ form two cycles, one of
which is large, i.e.  has length at least $p/2$.  If the large cycle contains
both edges $(0,1)$ and $(1,4)$, then it has $H$-value at least $3$. If the
large cycle does not contain the edge $(0,1)$, then this cycle together with
the vertices $0,2$ forms a large apple with a long stem $A^*_t$ with $t\ge
p/2$.  \end{proof}

Summarizing, we conclude that if $G$ has an induced cycle $C$ of length $p$ containing the center of an induced $S_{2,2,2}$, then $G$ contains either an induced $C^*_t$ or an induced $A^*_t$ with $t\ge p/8$. 
\end{proof}


\section{Destroying large extended cycles}
\label{sec:destroy}


According to the previous section, if an $(A^*_k,A^*_{k+1},\ldots)$-free subcubic graph $G$ contains a large induced cycle and an induced copy of $S_{2,2,2}$, then it must contain a large extended cycle $C^*$.
The goal of the present section is to show how to destroy large extended cycles by means of various local graph reductions. 
We describe these reductions in Section~\ref{sec:reductions} and apply them to large extended cycles in Section~\ref{sec:analysis}.

\subsection{Graph reductions}
\label{sec:reductions}

\subsubsection{$\Phi$-reduction and $house$-reduction}

We start with the $\Phi$-reduction introduced in \cite{subcubic}. It applies to a graph $G$ containing an induced 
copy of the graph $\Phi$ represented on the left of Figure~\ref{fig:Phi} and consists in replacing $\Phi$ by the graph on the right of Figure~\ref{fig:Phi}.

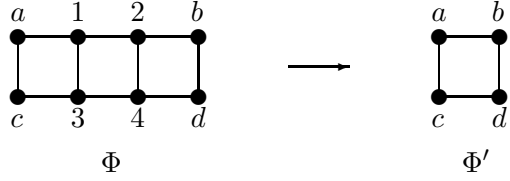
\begin{figure}[ht]
\begin{center}
\setlength{\unitlength}{0.2cm}
\begin{picture}(40,9)

\put(4,2){\circle*{1}}
\put(8,2){\circle*{1}}
\put(12,2){\circle*{1}}
\put(16,2){\circle*{1}}

\put(4,6){\circle*{1}}
\put(8,6){\circle*{1}}
\put(12,6){\circle*{1}}
\put(16,6){\circle*{1}}

\put(32,2){\circle*{1}}
\put(36,2){\circle*{1}}
\put(32,6){\circle*{1}}
\put(36,6){\circle*{1}}

\put(4,6){\line(1,0){12}}
\put(4,2){\line(1,0){12}}

\put(8,6){\line(0,-1){4}}
\put(4,6){\line(0,-1){4}}
\put(12,6){\line(0,-1){4}}
\put(16,6){\line(0,-1){4}}

\put(32,6){\line(1,0){4}}

\put(32,2){\line(1,0){4}}
\put(32,6){\line(0,-1){4}}

\put(36,6){\line(0,-1){4}}

\put(22,4){\vector(1,0){4}}

\put(3.5,0){$c$}
\put(3.5,7){$a$}

\put(7.5,0){$3$}
\put(7.5,7){$1$}

\put(11.5,0){$4$}
\put(11.5,7){$2$}

\put(15.5,0){$d$}
\put(15.5,7){$b$}

\put(31.5,0){$c$}
\put(31.5,7){$a$}

\put(35.5,0){$d$}
\put(35.5,7){$b$}

\put(9.5,-3){$\Phi$}
\put(33.5,-3){$\Phi'$}

\end{picture}
\end{center}
\caption{$\Phi$-reduction}
\label{fig:Phi}
\end{figure}

\begin{lemma}\label{lem:phi}
By applying the $\Phi$-reduction to an $(A^*_k,A^*_{k+1},\ldots)$-free subcubic graph $G$, we obtain an $(A^*_k,A^*_{k+1},\ldots)$-free subcubic graph $G'$ with 
$\alpha(G')=\alpha(G)-2$. 
\end{lemma}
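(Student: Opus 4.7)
The plan is to verify three things in turn: that $G'$ is subcubic, that $\alpha(G')=\alpha(G)-2$, and that $G'$ stays in the class of $(A^*_k,A^*_{k+1},\ldots)$-free graphs. Subcubicity is immediate: each of $a,b,c,d$ has exactly two neighbours inside $\Phi$ (for instance $a$ is adjacent to $c$ and $1$) and exactly two neighbours inside $\Phi'$ (namely $c$ and $b$), so every degree is preserved, while the four inner vertices $1,2,3,4$ have all three of their neighbours inside $\Phi$ by inspection of Figure~\ref{fig:Phi}.

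To prove $\alpha(G')=\alpha(G)-2$ I would use the saturation observation that no inner vertex of $\Phi$ has a neighbour outside $\Phi$, so independent sets can be analysed locally around $\{a,b,c,d\}$. Given an independent set $S'$ of $G'$, let $T'=S'\cap\{a,b,c,d\}$; since $T'$ must be independent in the $4$-cycle $\Phi'=a{-}b{-}d{-}c{-}a$, it is one of the seven sets $\emptyset,\{a\},\{b\},\{c\},\{d\},\{a,d\},\{b,c\}$, and in each case exactly one of the two pairs $\{1,4\}$ or $\{2,3\}$ is both independent in $\Phi$ and disjoint from the neighbourhoods of $T'$ in $\Phi$, so extending $S'$ by that pair gives an independent set of $G$ of size $|S'|+2$. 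Conversely, for an independent set $S$ of $G$ with $T=S\cap\{a,b,c,d\}$, the admissible $T$'s are the nine independent subsets of $\Phi[\{a,b,c,d\}]$ (whose only edges are $ac$ and $bd$); a direct case check gives $|S\cap V(\Phi)|\le|T|+2$ when $T$ is also independent in $\Phi'$, and $|S\cap V(\Phi)|\le 3$ when $T\in\{\{a,b\},\{c,d\}\}$, so dropping one offending boundary vertex when needed turns $S\cap V(G')$ into an independent set of $G'$ of size at least $|S|-2$.

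The serious obstacle is showing that $G'$ inherits the absence of induced apples with a long stem of size $\ge k$, which I would prove by contraposition. Suppose $G'$ contains an induced $A^*_t$ with $t\ge k$ on a vertex set $W\subseteq V(G')$; the only edges of $G'[W]$ that are not edges of $G[W]$ are those of $\{ab,cd\}$ lying in $W$, so if neither of $ab$, $cd$ appears inside $A^*_t$ then $G[W]=G'[W]$ and $G$ already contains $A^*_t$, contradicting the hypothesis. Otherwise, for each edge among $\{ab,cd\}$ that appears in the apple I would identify its role (a cycle edge, the stem edge, or the long-stem edge) and replace it by a short path inside $\Phi$: a cycle edge $ab$ is replaced by $a{-}1{-}2{-}b$ (and symmetrically $cd$ by $c{-}3{-}4{-}d$), which lengthens the cycle by two; a stem edge $ab$ with $a$ the stem attached to the cycle vertex $b$ is replaced by using $2$ as the new stem and $1$ as the new long-stem vertex, and analogously for a long-stem edge. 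Because $\{1,2,3,4\}$ have no neighbours in $V(G')\setminus\{a,b,c,d\}$, any chord or unwanted adjacency must involve only vertices already in $\Phi$, and the explicit edges $1{-}2,\,1{-}3,\,2{-}4,\,3{-}4$ together with the four vertical edges make these easy to rule out one by one. The modified configuration is an induced $A^*_{t'}$ in $G$ with $t'\ge t\ge k$, contradicting freeness of $G$.

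The hard part of the case analysis is when both $ab$ and $cd$ appear in the same $A^*_t$. Since an apple has a unique stem edge and a unique long-stem edge, only three patterns can arise: both are cycle edges, or one is a cycle edge and the other is either the stem or the long-stem edge (the remaining combinations are ruled out because the stem and long-stem edges share the stem vertex). In the first pattern the surviving edges $ac$ and $bd$ of $\Phi'$ would be chords of the induced cycle unless the cycle itself equals $\Phi'$, forcing $t=4$; here I would rebuild the apple using one of the internal $4$-cycles $a{-}1{-}3{-}c{-}a$ or $b{-}2{-}4{-}d{-}b$ of $\Phi$, chosen to contain the original stem attachment. In the mixed patterns the two detours use the disjoint inner pairs $\{1,2\}$ and $\{3,4\}$ and can be performed independently. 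Throughout, the fact that $\{1,2,3,4\}$ are degree-saturated inside $\Phi$ guarantees that no rerouting introduces adjacencies with the rest of $G$, which is the key fact that keeps the case analysis local and finite.
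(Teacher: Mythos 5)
Your argument is correct and, for the part that requires an idea, follows the same route as the paper: observe that the only edges of $G'[V(\Phi')]$ not present in $G$ are $ab$ and $cd$, rule out the possibility that both lie in an induced $A^*_t$ of $G'$, and reroute the one surviving new edge through the deleted inner vertices, relying on the fact that $1,2,3,4$ have all their neighbours inside $\Phi$. Your direct counting proof of $\alpha(G')=\alpha(G)-2$ is also correct and is a genuine addition, since the paper simply cites \cite{subcubic} for that equality.

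One claim should be fixed. In the branch where both $ab$ and $cd$ lie in the apple, you assert that in the ``mixed patterns'' the two detours ``can be performed independently''. That is false as written: $1$ is adjacent to $3$ and $2$ is adjacent to $4$ in $\Phi$, so the paths $a,1,2,b$ and $c,3,4,d$ interact, and rerouting both edges would create chords. The branch is, however, entirely vacuous, and you should dismiss it rather than handle it: if all of $a,b,c,d$ lie in $W$, then since $ac,bd\in E(G')$ the induced subgraph $G'[W]$ contains the $4$-cycle on $a,b,d,c$, which $A^*_t$ does not contain for $t\ge 5$ (this is exactly the one-line observation the paper uses); and in your mixed patterns the edge $ac$ or $bd$ already violates the degree of the stem or of the pendant vertex. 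With that replacement the proof is complete.
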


\begin{proof}
The equality $\alpha(G')=\alpha(G)-2$ was proved in \cite{subcubic}. To prove $(A^*_k,A^*_{k+1},\ldots)$-freeness of $G'$ assume by contradiction 
that $G'$ contains an induced copy $H$ of a large apple with a long stem $A^*_p$ with $p\ge k$. Then  at least one of the four vertices $a,b,c,d$, say $d$, does not belong to $H$,
since otherwise $H$ contains a $C_4$. 
But then the vertices inducing $H$ together with vertices $1$ and $2$ induce a subdivision of $A^*_p$ in $G$, which is impossible. 
\end{proof}

\medskip
A $house$ is the complement of a $P_5$. If a graph $G$ contains an induced $house$, the $house$-reduction consists in removing from $G$ the vertices 
that form a triangle in the $house$. It was shown in \cite{subcubic} that if $G$ is a subcubic graph, then the $house$-reduction reduces $\alpha(G)$ 
by exactly~1.

\subsubsection{$\Pi$-reduction}

Now we introduce the $\Pi$-reduction illustrated in Figure~\ref{fig:pi}. 
In a graph $G$, an induced $\Pi$ is the graph represented on the left of
Figure~\ref{fig:pi}.  We observe that vertex $f$ can be missing, in which case
vertices $a$ and $c$ have no other neighbours in $G$. However, if $f$ exists,
that is, if one of $a,c$ has a neighbour outside of $\{1,3,e\}$, then $f$ is a
common neighbour of $a,c$. Similarly, vertex $h$ can be missing, in which case
vertices $b$ and $d$ have no other neighbours in $G$.

\begin{figure}[ht]
\begin{center}
\setlength{\unitlength}{0.2cm}
\begin{picture}(40,9)
\put(0,2){\circle*{1}}
\put(4,2){\circle*{1}}
\put(8,2){\circle*{1}}
\put(12,2){\circle*{1}}
\put(16,2){\circle*{1}}
\put(20,2){\circle*{1}}
\put(0,6){\circle*{1}}
\put(4,6){\circle*{1}}
\put(8,6){\circle*{1}}
\put(12,6){\circle*{1}}
\put(16,6){\circle*{1}}
\put(20,6){\circle*{1}}
\put(28,2){\circle*{1}}
\put(32,2){\circle*{1}}
\put(36,2){\circle*{1}}
\put(40,2){\circle*{1}}
\put(28,6){\circle*{1}}
\put(32,6){\circle*{1}}
\put(36,6){\circle*{1}}
\put(40,6){\circle*{1}}

\put(4,2){\line(1,0){12}}
\put(0,6){\line(1,0){20}}
\put(32,2){\line(1,0){4}}
\put(28,6){\line(1,0){12}}

\put(0,6){\line(1,-1){4}}
\put(8,6){\line(1,-1){4}}

\put(28,6){\line(1,-1){4}}
\put(40,6){\line(-1,-1){4}}

\put(12,6){\line(-1,-1){4}}
\put(20,6){\line(-1,-1){4}}

\multiput(0,2)(0.5,0){8}{\circle*{0.1}}
\multiput(0,2)(0.5,0.5){8}{\circle*{0.1}}

\multiput(20,2)(-0.5,0){8}{\circle*{0.1}}
\multiput(20,2)(-0.5,0.5){8}{\circle*{0.1}}

\multiput(28,2)(0.5,0){8}{\circle*{0.1}}
\multiput(28,2)(0.5,0.5){8}{\circle*{0.1}}

\multiput(40,2)(-0.5,0){8}{\circle*{0.1}}
\multiput(40,2)(-0.5,0.5){8}{\circle*{0.1}}
\put(22,4){\vector(1,0){4}}
\put(-0.5,0){$f$}
\put(-0.5,7){$e$}

\put(3.5,0){$c$}
\put(3.5,7){$a$}

\put(7.5,0){$3$}
\put(7.5,7){$1$}

\put(11.5,0){$4$}
\put(11.5,7){$2$}

\put(15.5,0){$d$}
\put(15.5,7){$b$}

\put(19.5,0){$h$}
\put(19.5,7){$g$}

\put(27.5,0){$f$}
\put(27.5,7){$e$}

\put(31.5,0){$c$}
\put(31.5,7){$a$}

\put(35.5,0){$d$}
\put(35.5,7){$b$}

\put(39.5,0){$h$}
\put(39.5,7){$g$}

\put(9.5,-3){$\Pi$}
\put(33.5,-3){$\Pi'$}
\end{picture}
\end{center}
\caption{$\Pi$-reduction}
\label{fig:pi}
\end{figure}
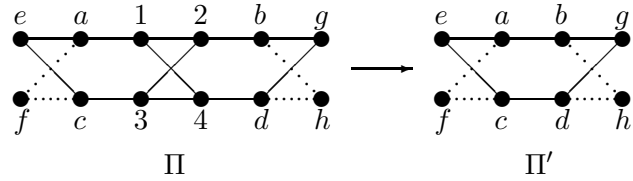

\begin{lemma}\label{lem:pi}
By applying the $\Pi$-reduction to an $(A^*_k,A^*_{k+1},\ldots)$-free subcubic graph $G$, we obtain an $(A^*_k,A^*_{k+1},\ldots)$-free subcubic graph $G'$ with 
$\alpha(G')=\alpha(G)-2$. 
\end{lemma}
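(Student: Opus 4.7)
The plan is to verify the equality $\alpha(G')=\alpha(G)-2$ and the preservation of $(A^*_k,A^*_{k+1},\ldots)$-freeness by exploiting two structural facts about the $\Pi$-pattern: the set $\{1,2,3,4\}$ induces a $C_4$ in $G$ whose four vertices each have exactly one neighbour outside, namely $a,b,c,d$ at $1,2,3,4$ respectively, and the pairs $\{a,b\}$ and $\{c,d\}$ are non-adjacent in $G$ but are joined inside the pattern by the length-$3$ paths $a,1,2,b$ and $c,3,4,d$, whereas in $G'$ they are connected by the new edges $ab$ and $cd$. The optional vertices $f,h$ cause no additional difficulty because their only neighbours inside the pattern are $\{a,c\}$ and $\{b,d\}$ respectively, so their presence or absence does not alter any of the case analyses below.

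To prove $\alpha(G)\ge\alpha(G')+2$, take a maximum independent set $I'$ of $G'$ and case-split on $(I'\cap\{a,b\},\,I'\cap\{c,d\})$; since $ab$ and $cd$ are edges of $G'$, each intersection has size at most one. When $a,c\notin I'$ (respectively $b,d\notin I'$), one appends $\{1,3\}$ (respectively $\{2,4\}$), because the only $V(G')$-neighbours of $\{1,3\}$ are $a,c$ and of $\{2,4\}$ are $b,d$. The two remaining configurations are $\{a,d\}\subseteq I'$ and $\{b,c\}\subseteq I'$; in the first case I would append the set $\{b,c\}$, using that $b$ and $c$ are non-adjacent in $G$ and that every further $G$-neighbour of $b$ or $c$ (namely $g,h$ for $b$ and $e,f$ for $c$) is already excluded from $I'$ through its adjacency to $a$ or $d$; the second case is symmetric. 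The converse inequality $\alpha(G')\ge\alpha(G)-2$ starts from a maximum $I$ of $G$: since $\{1,2,3,4\}$ is a $C_4$, the set $I_0=I\setminus\{1,2,3,4\}$ satisfies $|I_0|\ge|I|-2$, and any $G'$-conflict $\{a,b\}\subseteq I_0$ or $\{c,d\}\subseteq I_0$ forces further exclusions inside $\{1,2,3,4\}$, leaving enough slack to delete one endpoint per violated edge while keeping $|I'|\ge|I|-2$.

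For the hereditary property, suppose that $G'$ contains an induced $H\cong A^*_p$ with $p\ge k$. I would split on which of the new edges $ab$ and $cd$ lie in $E(H)$. If neither does, then $V(H)$ is disjoint from $\{1,2,3,4\}$ and no $G$-edge among $V(H)$ is missing from $G'$, so $G[V(H)]=H$ contradicts the freeness of $G$. If exactly one, say $ab$, is in $E(H)$, then $G[V(H)\cup\{1,2\}]$ is obtained from $H$ by replacing the edge $ab$ with the length-$3$ path $a,1,2,b$; no spurious edge appears because the only remaining $G$-neighbours of $1$ and $2$ are $3$ and $4$, which do not lie in $V(G')\supseteq V(H)$. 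This induced subdivision of $A^*_p$ contains an induced $A^*_q$ with $q\ge p$---either $A^*_{p+2}$ when $ab$ lay on the cycle, or $A^*_p$ on the same cycle using a $2$-edge prefix of the new stem when $ab$ lay on the stem---contradicting the freeness of $G$.

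The main obstacle is the remaining case, where $H$ uses both new edges and hence $\{a,b,c,d\}\subseteq V(H)$. The induced $C_6$ on $\{e,a,b,g,d,c\}$ in $G'$ cannot sit inside the apple $H$ unless $p=6$, so for $p\ge 7$ at least one of $e,g$ is absent from $V(H)$; the edges $ab$ and $cd$ then appear in $H$ as a matching, and one case-splits on their positions within the cycle--stem decomposition of $H$. Substituting each used edge by its length-$3$ $G$-path produces a lifted structure whose induced cycles can be shortened by the chords $\{1,4\}$ and $\{2,3\}$, and the delicate task is to show that in each subcase one of the resulting induced subcycles, together with either the original stem of $H$ or a stem identified inside $\{a,b,c,d\}\cup\{1,2,3,4\}$, still yields an induced $A^*_q$ with $q\ge k$ in $G$. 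The small exceptional case $p=6$ is handled directly: the rigidity of $\Pi$---in particular, that $f$, if present, is forced to be a common neighbour of $a,c$ and so cannot serve as a degree-$2$ stem-internal vertex of $H$---severely restricts the possible stems, and those that remain lift back to $G$ using the same paths and chords.
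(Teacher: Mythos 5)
Your treatment of the independence number is correct and matches the paper in substance: the direction $\alpha(G)\ge\alpha(G')+2$ is the same case analysis on $I'\cap\{a,b,c,d\}$ (your completion $\{b,c\}$ in the $\{a,d\}$-case works just as well as the paper's $\{b,3\}$), and your sketch of $\alpha(G')\ge\alpha(G)-2$ via ``delete one endpoint per violated new edge'' is easily checked to be sound, since each conflict $\{a,b\}$ or $\{c,d\}$ excludes two vertices of the $C_4$ on $\{1,2,3,4\}$ from $I$. Likewise, in the freeness argument, your handling of ``exactly one new edge in $E(H)$'' is correct and in fact cleaner than the paper's: since the only $G$-neighbours of $1,2$ inside $V(G')$ are $a$ and $b$, the set $V(H)\cup\{1,2\}$ induces a double subdivision of $H$ in $G$ regardless of whether $c$ belongs to $V(H)$, so the paper's separate degree argument for the subcase $c\in V(H)$ is not needed.

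The genuine gap is the case where $H$ contains both new edges $ab$ and $cd$. You explicitly leave this as ``the delicate task,'' sketching a lifting argument (substitute both edges by their length-$3$ paths, then shorten by the chords $14$ and $23$) that you never carry out, and your preliminary observation only excludes $e$ and $g$ from simultaneously lying in $V(H)$, saying nothing about $f,h$ or about what happens afterwards. The correct resolution -- and the one the paper uses -- is that this case simply cannot occur, so no lifting is needed. If $ab,cd\in E(H)$ then $a,b,c,d\in V(H)$, and the only $G'$-neighbours of these four vertices outside $\{a,b,c,d\}$ are $e,f,g,h$. If none of $e,f,g,h$ is in $V(H)$, then $\{a,b\}$ is a $K_2$-component of $H$, contradicting connectivity of $A^*_p$. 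If two or more are in $V(H)$, then $H$ contains an induced $C_4$ (e.g.\ $a,e,c,f$ or $b,g,d,h$) or an induced cycle of length at most $6$ (e.g.\ $a,e,c,d,g,b$), which is impossible since the unique induced cycle of $A^*_p$ has length $p\ge k$. Hence exactly one of $e,f,g,h$ lies in $V(H)$; as each of them is a common neighbour of $\{a,c\}$ or of $\{b,d\}$, the two vertices of the other pair then have degree $1$ in $H$ -- but $A^*_p$ has exactly one vertex of degree $1$. You should replace your sketched case analysis with this short impossibility argument; as written, the proposal does not establish the $(A^*_k,A^*_{k+1},\ldots)$-freeness of $G'$.
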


\begin{proof}
Let $S$ be a maximum independent set in $G$ and $X=S\cap \{1,2,3,4\}$, $Y=S\cap \{a,b,c,d\}$.
If $|Y|=4$, then $|X|=0$ and hence $S-\{a,c\}$ is an independent set in $G'$ of size $\alpha(G)-2$.
If $|Y|=3$, say $Y=\{a,b,c\}$, then $X=\{4\}$ and hence $S-\{4,b\}$ is an independent set in $G'$ of size $\alpha(G)-2$.

Let $|Y|=2$. Then, up to symmetry, $Y=\{a,b\}$ or $Y=\{a,d\}$ or $Y=\{a,c\}$. If $Y=\{a,b\}$ or $Y=\{a,d\}$
then $|X|=1$ and hence $S-(X\cup\{a\})$ is an independent set in $G'$ of size $\alpha(G)-2$.
If $Y=\{a,c\}$ or $|Y|\le 1$, then $|X|=2$ and hence $S-X$ is an independent set in $G'$ of size $\alpha(G)-2$.
Therefore, $\alpha(G')\ge\alpha(G)-2$.

Conversely, let $S$ be a maximum independent set in $G'$ and $Y=S\cap \{a,b,c,d\}$. Clearly, $|Y|\le 2$ and if $|Y|=2$ we can assume
without loss of generality that $Y=\{a,d\}$ or $Y=\{a,c\}$. 

If $|Y|\le 1$, then $S$ can be always extended by adding two vertices from $\{1,2,3,4\}$ to an independent set of size $\alpha(G')+2$ in $G$.
If  $Y=\{a,d\}$, then $g,h\not\in S$ and hence $S\cup \{b,3\}$ is an independent set of size $\alpha(G')+2$ in $G$.
If  $Y=\{a,c\}$, then $S\cup \{2,4\}$ is an independent set of size $\alpha(G')+2$ in $G$.
Therefore, $\alpha(G')+2\le\alpha(G)$ and hence $\alpha(G')=\alpha(G)-2$.

To prove $(A^*_k,A^*_{k+1},\ldots)$-freeness of $G'$ assume by contradiction 
that $G'$ contains an induced copy $H$ of a large apple with a long stem $A^*_p$ with $p\ge k$.
Then $H$ contains at least one of the edges $ab$ and $cd$, since otherwise the same vertices induced $H$ in $G$.

If $H$ contains both edges $ab$ and $cd$, then exactly one of the vertices $e,f,g,h$ belongs to $H$.
Indeed, if two of these vertices belong to $H$, then $H$ contains a small cycle, and  if none of them belongs to $H$, then $H$ is not connected.
Assuming, without loss of generality, that $e$ belongs to $H$, we conclude that $H$ contains two vertices $b$ and $d$, each of which has degree 1 in $H$, which is impossible.

Now assume the edge $ab$ belongs to $H$ and the edge $cd$ does not. Without loss of generality, $d\not\in V(H)$.
If $c\in V(H)$, then $e\not\in V(H)$ or $f\not\in V(H)$, 
since otherwise $H$ contains a $C_4$, say $f\not\in V(H)$. Then $c$ has degree 1 in $H$ and hence vertex $a$ must have degree 3 in $H$. But $a$ has degree 2 in $H$, a contradiction.

If $c\not\in V(H)$, then the vertices inducing $H$ together with vertices $1$ and $2$ induce a subdivision of $A^*_p$ in $G$, which is impossible. 
\end{proof}

\subsubsection{$\Gamma$-reduction}

One more reduction is illustrated in Figure~\ref{fig:gamma}. We will refer to
it as $\Gamma$-reduction.  Again, vertex $f$ can be missing,  in which case
vertices $b$ and $d$ have degree 2 in the graph, but if $f$ exists it is a
common neighbour of $b,d$.

\begin{figure}[ht]
\begin{center}
\setlength{\unitlength}{0.2cm}
\begin{picture}(40,9)
\put(4,2){\circle*{1}}
\put(8,2){\circle*{1}}
\put(12,2){\circle*{1}}
\put(16,2){\circle*{1}}
\put(20,2){\circle*{1}}
\put(4,6){\circle*{1}}
\put(8,6){\circle*{1}}
\put(12,6){\circle*{1}}
\put(16,6){\circle*{1}}
\put(20,6){\circle*{1}}
\put(32,2){\circle*{1}}
\put(36,2){\circle*{1}}
\put(40,2){\circle*{1}}
\put(32,6){\circle*{1}}
\put(36,6){\circle*{1}}
\put(40,6){\circle*{1}}

\put(4,2){\line(1,0){12}}
\put(4,6){\line(1,0){16}}
\put(32,2){\line(1,0){4}}
\put(32,6){\line(1,0){8}}

\put(8,6){\line(1,-1){4}}

\put(40,6){\line(-1,-1){4}}

\put(12,6){\line(-1,-1){4}}
\put(20,6){\line(-1,-1){4}}

\put(4,6){\line(0,-1){4}}
\put(32,6){\line(0,-1){4}}

\multiput(20,2)(-0.5,0){8}{\circle*{0.1}}
\multiput(20,2)(-0.5,0.5){8}{\circle*{0.1}}


\multiput(40,2)(-0.5,0){8}{\circle*{0.1}}
\multiput(40,2)(-0.5,0.5){8}{\circle*{0.1}}
\put(24,4){\vector(1,0){4}}

\put(3.5,0){$c$}
\put(3.5,7){$a$}

\put(7.5,0){$3$}
\put(7.5,7){$1$}

\put(11.5,0){$4$}
\put(11.5,7){$2$}

\put(15.5,0){$d$}
\put(15.5,7){$b$}

\put(19.5,0){$f$}
\put(19.5,7){$e$}


\put(31.5,0){$c$}
\put(31.5,7){$a$}

\put(35.5,0){$d$}
\put(35.5,7){$b$}

\put(39.5,0){$f$}
\put(39.5,7){$e$}

\put(11.5,-3){$\Gamma$}
\put(35.5,-3){$\Gamma'$}
\end{picture}
\end{center}
\caption{$\Gamma$-reduction}
\label{fig:gamma}
\end{figure}
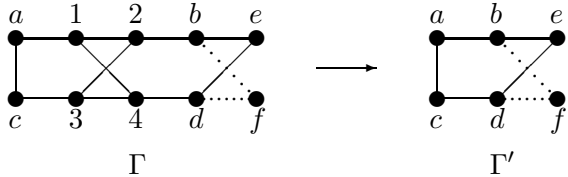

\begin{lemma}\label{lem:gamma}
By applying the $\Gamma$-reduction to an  $(A^*_k,A^*_{k+1},\ldots)$-free subcubic graph $G$, we obtain an $(A^*_k,A^*_{k+1},\ldots)$-free subcubic graph $G'$ with 
$\alpha(G')=\alpha(G)-2$. 
\end{lemma}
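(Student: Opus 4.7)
The plan is to verify the lemma by splitting it into two parts, in the same style as the proof of Lemma~\ref{lem:pi}: first establish the identity $\alpha(G')=\alpha(G)-2$, and then prove $(A^*_k,A^*_{k+1},\ldots)$-freeness of $G'$. A handful of preliminary observations from Figure~\ref{fig:gamma} will do most of the bookkeeping: the inner vertices $\{1,2,3,4\}$ induce a $C_4$ (with non-edges $1$-$3$ and $2$-$4$); the only bipartite edges from inner to boundary are $a$-$1$, $b$-$2$, $c$-$3$, $d$-$4$; each inner vertex has degree $3$ inside $\Gamma$ and so by subcubicity has no external neighbour; and the description of $f$ in the figure caption forces $b$ and $d$ to have no external neighbours either. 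Finally, the only edges of $G'$ that are not already present in $G-\{1,2,3,4\}$ are $a$-$b$ and $c$-$d$.

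For $\alpha(G')\ge\alpha(G)-2$ I would take a maximum independent set $S$ of $G$, set $X=S\cap\{1,2,3,4\}$ and $Y=S\cap\{a,b,c,d,e,f\}$, and note $|X|\le 2$ since $\{1,2,3,4\}$ is a $C_4$. Then $S\setminus X$ is already independent in $G'$ unless $Y$ contains $\{a,b\}$ or $\{c,d\}$. A short check of the four "bad" patterns $Y\cap\{a,b,c,d\}\in\{\{a,b\},\{c,d\},\{a,b,d\},\{b,c,d\}\}$ shows that in each of them the bipartite edges $a$-$1$, $b$-$2$, $c$-$3$, $d$-$4$ force $|X|\le 1$, so removing one appropriate vertex of $\{a,b,c,d\}$ (to break the forbidden $G'$-edge) together with $X$ costs at most two vertices of $S$ in total. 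Conversely, for $\alpha(G)\ge\alpha(G')+2$ I would take a maximum independent set $S'$ in $G'$, enumerate the possible $Z'=S'\cap\{a,b,c,d\}$ (which must be independent in the path $b$-$a$-$c$-$d$ of $\Gamma'[\{a,b,c,d\}]$), and exhibit, for each one, a two-element set $T$ with $S'\cup T$ independent in $G$: for $Z'\in\{\emptyset,\{a\},\{b\},\{c\},\{d\},\{b,d\}\}$ either $\{1,3\}$ or $\{2,4\}$ works, while in the delicate cases $Z'=\{a,d\}$ and $Z'=\{b,c\}$ I instead use $T=\{b,3\}$ and $T=\{d,1\}$ respectively. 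Here one uses crucially that $b$ (respectively $d$) has no external neighbour in $G$, so adjoining it to $S'$ cannot create any hidden conflict.

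For $(A^*_k,A^*_{k+1},\ldots)$-freeness of $G'$, I would argue by contradiction. Assume $G'$ contains an induced copy $H$ of $A^*_p$ with $p\ge k$. If $H$ uses neither new edge $a$-$b$ nor $c$-$d$, the same vertex set induces $H$ in $G$, contradicting the hypothesis on $G$. Otherwise, I lift each used new edge back to its corresponding induced path in $G$: $a$-$b$ becomes $a$-$1$-$2$-$b$ and $c$-$d$ becomes $c$-$3$-$4$-$d$. Chord-freeness of the lift is guaranteed because $1,2,3,4$ have no neighbours outside $\Gamma$ by subcubicity. I then split according to the role of each lifted edge in $H$: a cycle edge enlarges the cycle by two and directly yields an induced $A^*_{p+2}$ in $G$, whereas the stem-to-cycle edge and the pure stem edge are handled by truncating the lifted path at its two vertices nearest the cycle of $H$, which exhibits an induced $A^*_p$ inside $G$. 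When both $a$-$b$ and $c$-$d$ are used the analysis reduces to applying the single-edge argument twice, since $\{a,b\}$ and $\{c,d\}$ are disjoint and their lifted paths do not interact. The main obstacle is precisely this case analysis: the awkward cases are those in which a lifted new edge plays the role of a stem edge, where one must verify that the truncation is truly induced (no hidden chord between the new inner vertices and the cycle of $H$), and it is exactly here that the degree-$3$ bookkeeping on $1,2,3,4$ becomes indispensable.
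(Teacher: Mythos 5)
Your treatment of the independence number is correct and essentially the same as the paper's: in the forward direction you delete $X$ plus at most one endpoint of a violated new edge (your four ``bad patterns'' are exactly the configurations $Y\supseteq\{a,b\}$ or $Y\supseteq\{c,d\}$ that the paper's case analysis on $|Y|$ covers), and in the converse direction you use the same augmenting pairs, including $\{b,3\}$ for $Z'=\{a,d\}$, correctly exploiting the fact that $b$ and $d$ have no neighbours outside the named vertices. The single-new-edge case of the freeness argument (lift $a$-$b$ to $a$-$1$-$2$-$b$; the lift is induced because $3,4\notin V(H)$, and a twice-subdivided $A^*_p$ still contains an induced $A^*_{p'}$ with $p'\ge p$) is also sound and matches the paper, with somewhat more detail on the stem subcases than the paper bothers to give.

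The gap is in the case where the alleged apple $H$ in $G'$ uses both new edges $a$-$b$ and $c$-$d$. You assert that the two lifted paths $a$-$1$-$2$-$b$ and $c$-$3$-$4$-$d$ ``do not interact,'' but this contradicts your own preliminary observation that $\{1,2,3,4\}$ induces a $C_4$: the edges $1$-$4$ and $2$-$3$ are present in $G$, so the subgraph of $G$ induced by $V(H)\cup\{1,2,3,4\}$ contains the four-cycle $1$-$2$-$3$-$4$ and is not a subdivision of $H$; ``applying the single-edge argument twice'' therefore does not go through. The correct way to dispose of this case --- and what the paper does --- is to show it cannot occur at all: if $a,b,c,d\in V(H)$ then, since $a$-$c$ is an edge of $G'$, $H$ contains the path $b$-$a$-$c$-$d$; neither $e$ nor $f$ can belong to $H$, since either would close a cycle of length at most $5$ through $b$ and $d$, which a large $A^*_p$ does not contain; but then $b$ and $d$ both have degree one in $H$, whereas an apple with a long stem has exactly one vertex of degree one. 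Your proof needs this (or an equivalent) argument in place of the double lift.
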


\begin{proof}
Let $S$ be a maximum independent set in $G$ and $X=S\cap \{1,2,3,4\}$, $Y=S\cap \{a,b,c,d\}$.
If $|Y|=3$, say $Y=\{a,b,d\}$, then $X=\{3\}$ and hence $S-\{a,3\}$ is an independent set in $G'$ of size $\alpha(G)-2$.

Let $|Y|=2$. Then, up to symmetry, $Y=\{a,b\}$ or $Y=\{a,d\}$ or $Y=\{b,d\}$. If $Y=\{a,b\}$ or $Y=\{a,d\}$
then $|X|=1$ and hence $S-(X\cup\{a\})$ is an independent set in $G'$ of size $\alpha(G)-2$.
If $Y=\{b,d\}$ or $|Y|\le 1$, then $|X|=2$ and hence $S-X$ is an independent set in $G'$ of size $\alpha(G)-2$.
Therefore, $\alpha(G')\ge\alpha(G)-2$.

Conversely, let $S$ be a maximum independent set in $G'$ and $Y=S\cap \{a,b,c,d\}$. Clearly, $|Y|\le 2$ and if $|Y|=2$ we can assume
without loss of generality that $Y=\{a,d\}$ or $Y=\{b,d\}$. 

If $|Y|\le 1$ or $Y=\{b,d\}$, then $S$ can be extended by adding two vertices from $\{1,2,3,4\}$ to an independent set of size $\alpha(G')+2$ in $G$.
If  $Y=\{a,d\}$, then $e,f\not\in S$ and hence $S\cup \{b,3\}$ is an independent set of size $\alpha(G')+2$ in $G$.
Therefore, $\alpha(G')+2\le\alpha(G)$ and hence $\alpha(G')=\alpha(G)-2$.

To prove $(A^*_k,A^*_{k+1},\ldots)$-freeness of $G'$ assume by contradiction 
that $G'$ contains an induced copy $H$ of a large apple with a long stem $A^*_p$ with $p\ge k$.
Then $H$ contains at least one of the edges $ab$ and $cd$, since otherwise the same vertices induced $H$ in $G$.

If $H$ contains both edges $ab$ and $cd$, then neither $e$ nor $f$ belongs to $H$,
since otherwise  $H$ contains a small cycle. Then both $b$ and $d$ have degree 1 in $H$, which is impossible.

If the edge $ab$ belongs to $H$ and the edge $cd$ does not, 
then the vertices inducing $H$ together with vertices $1$ and $2$ induce a subdivision of $A^*_p$ in $G$, which is impossible. 
\end{proof}

\subsubsection{$\Theta$-reduction}

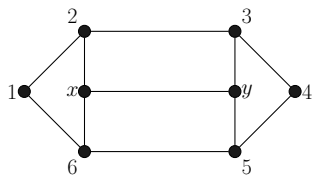
\begin{figure}[ht]
\begin{center}
\begin{tikzpicture}[scale=0.4, transform shape]
\tikzstyle{sommet}=[circle, draw,fill=black!90, inner sep=1pt, inner sep=4pt, minimum size=0.1cm]

\node[sommet] (x) at (0,0) {};
\node[sommet] (2) at (0,2) {};
\node[sommet] (6) at (0,-2) {};
\node[sommet] (1) at (-2,0) {};
\node[sommet] (y) at (5,0) {};
\node[sommet] (3) at (5,2) {};
\node[sommet] (5) at (5,-2) {};
\node[sommet] (4) at (7,0) {};

\draw (6)--(1)--(2)--(x)--(6)--(5);
\draw (3)--(4)--(5)--(y)--(3)--(2);
\draw (x)--(y);

\node () at (-0.4,0) {\huge $x$};
\node () at (-0.4,2.5) {\huge $2$};
\node () at (-0.4,-2.5) {\huge $6$};
\node () at (-2.4,0) {\huge $1$};

\node () at (5.4,0) {\huge $y$};
\node () at (5.4,2.5) {\huge $3$};
\node () at (5.4,-2.5) {\huge $5$};
\node () at (7.4,0) {\huge $4$};
\end{tikzpicture}
\end{center}
\caption{$\Theta$ graph}
\label{fig:theta}
\end{figure}

\begin{lemma}
If a subcubic graph $G$ contains an induced $\Theta$ (see Figure~\ref{fig:theta}), then the deletion of 
vertices $x,y$ reduces the independence number of $G$ by exactly 1.
\end{lemma}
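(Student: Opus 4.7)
The plan is to establish the equality $\alpha(G-\{x,y\}) = \alpha(G) - 1$ by proving the two inequalities separately. The bound $\alpha(G-\{x,y\}) \ge \alpha(G) - 1$ is immediate: $x$ and $y$ are adjacent, so any independent set of $G$ loses at most one vertex when $\{x,y\}$ is removed from it.

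For the reverse inequality, I would start from a maximum independent set $S'$ of $G-\{x,y\}$ and exhibit an independent set of $G$ of size $|S'|+1$. The key structural observation is that in $\Theta$ every vertex except $1$ and $4$ has degree $3$; by the subcubic hypothesis, the vertices $x, y, 2, 3, 5, 6$ therefore have no neighbours outside $\Theta$, and only $1$ and $4$ can cross the boundary of the gadget. Consequently, any modification of $S'$ performed inside $\Theta$ that does not add $1$ or $4$ is automatically free of conflicts with $S'$ outside $\Theta$.

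Then I would split on $S' \cap \{2, 3, 5, 6\}$. If $S' \cap \{2, 6\} = \emptyset$ then, since $N_G(x) = \{2, 6, y\}$, the set $S' \cup \{x\}$ is independent in $G$; symmetrically, if $S' \cap \{3, 5\} = \emptyset$ we add $y$. Otherwise $S' \cap \{1, \ldots, 6\}$ is an independent set of the $6$-cycle $1{-}2{-}3{-}4{-}5{-}6{-}1$ (the induced subgraph of $\Theta - \{x,y\}$ on these six vertices) meeting both $\{2, 6\}$ and $\{3, 5\}$; a brief enumeration of independent sets of $C_6$ pins this intersection down to either $\{2, 5\}$ or $\{3, 6\}$. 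In the first sub-case I take $S = (S' \setminus \{2\}) \cup \{x, 3\}$, and in the second the symmetric $S = (S' \setminus \{3\}) \cup \{y, 2\}$. Independence follows because, when $\{2, 5\} \subseteq S'$, the vertices $1, 3, 4, 6$ are all blocked on the $6$-cycle and so absent from $S'$, while the newly added $x$ and $3$ have no neighbours outside $\Theta$ by the degree-$3$ remark; the symmetric sub-case is analogous.

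The only step I expect to require a touch of care is the enumeration on the $6$-cycle that identifies exactly the two configurations $\{2, 5\}$ and $\{3, 6\}$ in the otherwise case. Once that is verified, both replacements are local to $\Theta$, avoid $\{1, 4\}$, and therefore cannot create conflicts with the rest of $G$, so independence of the new set follows by inspection.
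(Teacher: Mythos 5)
Your proof is correct and takes essentially the same approach as the paper: the exchange of $2$ for $\{3,x\}$ (respectively $3$ for $\{2,y\}$) when the set contains $\{2,5\}$ (respectively $\{3,6\}$) but neither $x$ nor $y$ is exactly the swap the paper uses to show that every maximum independent set of $G$ must contain one of $x,y$. You merely package the argument as an explicit augmentation starting from an independent set of $G-\{x,y\}$ rather than as a contradiction to maximality, which is only a presentational difference (and your version is, if anything, a little more carefully spelled out).
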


\begin{proof}
Let $S$ be a maximum independent set in $G$.
Clearly, $S$ contains at most one vertex in $\{x,y\}$.
Now let us show that $S$ contains at least one vertex in $\{x,y\}$.
Assume $x,y\not\in S$. Then $x$ has a neighbour in $S$ and $y$ has a neighbours in $S$.
Without loss of generality let $2,5\in S$. But then $S$ is not maximum, since by replacing $2$ with $3,x$ 
we obtain a larger independent set.   
\end{proof}

\subsubsection{Total struction and subgraph reduction}

Total struction is an operation that was introduced in
\cite{struction-revisited}. Roughly speaking, this operation allows us to
identify a part of the graph that can be replaced by an auxiliary graph in a
way that decreases the size of the maximum independent set by a precise value.
Even though this operation is quite powerful, in this paper we will only need
to use to special cases of total struction, given by Corollaries \ref{lem:hred}
and \ref{lem:k23}. To keep the presentation self-contained we give proofs for
both transformations which also show that we do not create new forbidden
subgraphs.

\begin{corollary}\label{lem:hred}

For any graph $G=(V,E)$ and $H\subseteq V$ let $N[H]$ denote the set of
vertices at distance at most $1$ from $H$. Then, we have the following: if
$\alpha(G[H])=\alpha(G[N[H]])$, then $\alpha(G[V\setminus
N[H]])=\alpha(G)-\alpha(G[H])$.

\end{corollary}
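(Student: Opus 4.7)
The plan is to prove the equality by establishing both inequalities separately, each via a short and natural independent-set construction, with the hypothesis $\alpha(G[H])=\alpha(G[N[H]])$ used only in the upper bound.

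For the lower bound $\alpha(G)\ge \alpha(G[V\setminus N[H]])+\alpha(G[H])$, I would take a maximum independent set $S_1$ in $G[V\setminus N[H]]$ and a maximum independent set $S_2$ in $G[H]$, and consider $S_1\cup S_2$. There is no edge between $V\setminus N[H]$ and $H$ by the definition of $N[H]$ (any neighbour of a vertex of $H$ lies in $N[H]$), so $S_1\cup S_2$ is an independent set in $G$, and its size is exactly $\alpha(G[V\setminus N[H]])+\alpha(G[H])$. This gives the desired lower bound on $\alpha(G)$, hence the desired upper bound on $\alpha(G[V\setminus N[H]])$.

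For the reverse inequality $\alpha(G)\le \alpha(G[V\setminus N[H]])+\alpha(G[H])$, let $S$ be a maximum independent set in $G$. Partition it as $S=S'\cup S''$ where $S'=S\cap N[H]$ and $S''=S\cap (V\setminus N[H])$. Then $S'$ is an independent set in $G[N[H]]$, so $|S'|\le \alpha(G[N[H]])=\alpha(G[H])$ by hypothesis; and $S''$ is an independent set in $G[V\setminus N[H]]$, so $|S''|\le \alpha(G[V\setminus N[H]])$. Adding these yields $\alpha(G)=|S|\le \alpha(G[H])+\alpha(G[V\setminus N[H]])$, which rearranges to $\alpha(G[V\setminus N[H]])\ge \alpha(G)-\alpha(G[H])$. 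Combined with the first step, this gives equality.

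There is no real obstacle here: the statement is essentially a bookkeeping argument, and the only place the hypothesis enters is the single bound $|S'|\le \alpha(G[N[H]])=\alpha(G[H])$. Note also that no forbidden-subgraph issue arises, since $G[V\setminus N[H]]$ is an induced subgraph of $G$ and therefore automatically inherits $(A^*_k,A^*_{k+1},\ldots)$-freeness; so unlike the $\Phi$-, $\Pi$-, and $\Gamma$-reductions, no separate argument is needed to verify that no new large apple with a long stem is created.
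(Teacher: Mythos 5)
Your proof is correct and follows essentially the same argument as the paper: the lower bound by uniting independent sets of $G[H]$ and $G[V\setminus N[H]]$ (which have no edges between them), and the upper bound by splitting a maximum independent set of $G$ over $N[H]$ and its complement and invoking $\alpha(G[N[H]])=\alpha(G[H])$. Nothing further is needed.
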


\begin{proof}

Let us first observe that $\alpha(G)\ge \alpha(G[V\setminus N[H]]) +
\alpha(G[H])$ because that union of an independent set of $G[H]$ with an
independent set of $G[V\setminus N[H]]$ is an independent set of $G$. For the
other direction, we note that $\alpha(G)\le \alpha(G[V\setminus N[H]]) +
\alpha(G[N[H]])$. To see this, take an independent set $S$ in $G$ and observe
that $|S| = |S\setminus N[H]| + |S\cap N[H]| \le \alpha(G[V\setminus N[H]]) +
\alpha(G[N[H]])$. However, since $\alpha(G[N[H]]) = \alpha(G[H])$ we obtain the
lemma.  \end{proof}

Informally, Corollary \ref{lem:hred} gives rise to the following
transformation: if we can find a set of vertices $H$ such that $G[H]$ and
$G[N[H]]$ have the same maximum independent set, then we simply select an
independent set of $H$ in our solution and delete all vertices of $N[H]$.  The
deletion of $N[H]$ in the case when $\alpha(G[H])=\alpha(G[N[H]])$ was called
in \cite{subcubic} the $H$-subgraph reduction.

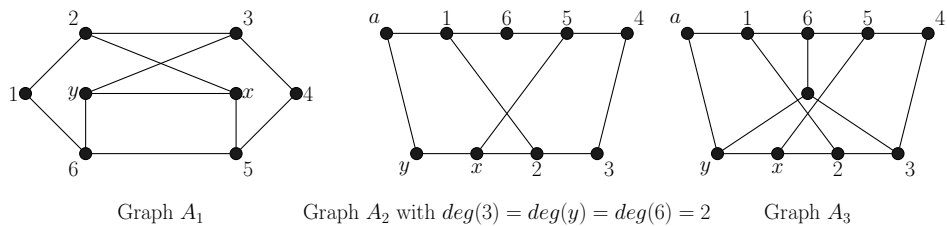
\begin{figure}[ht]
\begin{center}
\begin{tikzpicture}[scale=0.4, transform shape]
\tikzstyle{sommet}=[circle, draw,fill=black!90, inner sep=1pt, inner sep=4pt, minimum size=0.1cm]

\node[sommet] (x) at (-10,0) {};
\node[sommet] (2) at (-10,2) {};
\node[sommet] (6) at (-10,-2) {};
\node[sommet] (1) at (-12,0) {};
\node[sommet] (y) at (-5,0) {};
\node[sommet] (3) at (-5,2) {};
\node[sommet] (5) at (-5,-2) {};
\node[sommet] (4) at (-3,0) {};

\draw (6)--(1)--(2)--(3)--(4)--(5)--(6);
\draw (2)--(y)--(5); \draw (3)--(x)--(6);
\draw (x)--(y);

\node () at (-10.4,0) {\huge $y$};
\node () at (-10.4,2.5) {\huge $2$};
\node () at (-10.4,-2.5) {\huge $6$};
\node () at (-12.4,0) {\huge $1$};

\node () at (-4.6,0) {\huge $x$};
\node () at (-4.6,2.5) {\huge $3$};
\node () at (-4.6,-2.5) {\huge $5$};
\node () at (-2.6,0) {\huge $4$};

\node () at (-7.5,-4) {\huge Graph $A_1$};


\node[sommet] (a') at (0,2) {};
\node[sommet] (1') at (2,2) {};
\node[sommet] (2') at (4,2) {};
\node[sommet] (3') at (6,2) {};
\node[sommet] (4') at (8,2) {};

\node[sommet] (5') at (7,-2) {};
\node[sommet] (6') at (5,-2) {};
\node[sommet] (x') at (3,-2) {};
\node[sommet] (y') at (1,-2) {};

\draw (a')--(1')--(2')--(3')--(4')--(5')--(6')--(x')--(y')--(a');
\draw (1')--(6'); \draw (3')--(x');

\node () at (-0.4,2.5) {\huge $a$};
\node () at (2,2.5) {\huge $1$};
\node () at (4,2.5) {\huge $6$};
\node () at (6,2.5) {\huge $5$};
\node () at (8.4,2.5) {\huge $4$};

\node () at (7.4,-2.5) {\huge $3$};
\node () at (5,-2.5) {\huge $2$};
\node () at (3,-2.5) {\huge $x$};
\node () at (0.6,-2.5) {\huge $y$};

\node () at (4,-4) {\huge Graph $A_2$ with $deg(3)=deg(y)=deg(6)=2$};

\node[sommet] (a'') at (10,2) {};
\node[sommet] (1'') at (12,2) {};
\node[sommet] (2'') at (14,2) {};
\node[sommet] (3'') at (16,2) {};
\node[sommet] (4'') at (18,2) {};

\node[sommet] (5'') at (17,-2) {};
\node[sommet] (6'') at (15,-2) {};
\node[sommet] (x'') at (13,-2) {};
\node[sommet] (y'') at (11,-2) {};
\node[sommet] (c) at (14,0) {};

\draw (a'')--(1'')--(2'')--(3'')--(4'')--(5'')--(6'')--(x'')--(y'')--(a'');
\draw (1'')--(6''); \draw (3'')--(x'');
\draw (2'')--(c)--(y''); \draw (c)--(5'');

\node () at (9.6,2.5) {\huge $a$};
\node () at (12,2.5) {\huge $1$};
\node () at (14,2.5) {\huge $6$};
\node () at (16,2.5) {\huge $5$};
\node () at (18.4,2.5) {\huge $4$};

\node () at (17.4,-2.5) {\huge $3$};
\node () at (15,-2.5) {\huge $2$};
\node () at (13,-2.5) {\huge $x$};
\node () at (10.6,-2.5) {\huge $y$};

\node () at (14,-4) {\huge Graph $A_3$};
\end{tikzpicture}
\end{center}
\caption{Graphs $A_1$, $A_2$ and $A_3$}
\label{fig:subgraphs}
\end{figure}

It is not difficult to check that if $A_1$, $A_2$, or $A_3$ (see
Figure~\ref{fig:subgraphs}) is an induced subgraph of a subcubic graph, then we
can use Corollary \ref{lem:hred} as we have:

\begin{itemize}
\item $\alpha(A_1[\{2,3,5,6,x,y\}])=\alpha(A_1)=3$,
\item $\alpha(A_2[\{1,2,3,5,6,x,y\}])=\alpha(A_2)=4$,
\item $\alpha(A_3[\{1,2,3,5,6,x,y\}])=\alpha(A_3)=4$.
\end{itemize}

\begin{lemma}
If $A_1$, $A_2$, or $A_3$ is an induced subgraph of a subcubic graph $G$, then $\alpha(G-A_1)=\alpha(G)-3$,
$\alpha(G-A_2)=\alpha(G)-4$, $\alpha(G-A_3)=\alpha(G)-4$.
\end{lemma}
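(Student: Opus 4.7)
The plan is to obtain each of the three equalities as an immediate application of Corollary~\ref{lem:hred}, taking $H_1 := \{2,3,5,6,x,y\}$ inside $A_1$ and $H_2 := H_3 := \{1,2,3,5,6,x,y\}$ inside $A_2, A_3$. The three independence-number equalities listed just before the lemma then provide exactly the hypothesis $\alpha(G[H_i]) = \alpha(G[N[H_i]])$ needed by the corollary, \emph{provided} that $N_G[H_i]$ coincides with $V(A_i)$, so that $G[H_i]$ and $G[N[H_i]]$ really are the induced subgraphs $A_i[H_i]$ and $A_i$ on which those equalities are asserted.

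The crux of the argument is therefore to verify that no vertex of $H_i$ has a neighbour of $G$ outside $V(A_i)$. Since $G$ is subcubic, it suffices to show that every vertex of $H_i$ already attains degree $3$ inside $A_i$. A direct inspection of Figure~\ref{fig:subgraphs} confirms this: in $A_1$ each of $2,3,5,6,x,y$ is incident to exactly three edges; in $A_3$ the added vertex $c$ raises the degrees of $3,6,y$ to $3$, while $1,2,5,x$ are already of degree $3$ there. The case of $A_2$ is the only one where the picture alone is not enough, since $3,6,y$ have only degree $2$ inside $A_2$; this is precisely why the figure explicitly carries the side condition $\deg_G(3)=\deg_G(6)=\deg_G(y)=2$. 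Under this hypothesis, the vertices of $H_2$ again have no external neighbours, so $N_G[H_2]=V(A_2)$.

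Once the identity $N_G[H_i] = V(A_i)$ is established, Corollary~\ref{lem:hred} delivers
\[
\alpha\bigl(G[V(G) \setminus V(A_i)]\bigr) \;=\; \alpha(G) - \alpha(G[H_i]) \;=\; \alpha(G) - \alpha(A_i),
\]
and substituting $\alpha(A_1)=3$ and $\alpha(A_2)=\alpha(A_3)=4$ yields the three claimed equalities. I expect no genuine obstacle beyond this degree bookkeeping and the routine verification of the three bullet-point equalities, which can be handled by exhibiting explicit maximum independent sets (for instance $\{2,5,x\}$ in $G[H_1]$ and $\{1,3,5,y\}$ in $G[H_2]$ and $G[H_3]$) and ruling out larger ones by a short local case analysis on the cycle structure of each $A_i$. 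The subtlest point is simply remembering to invoke the degree-$2$ side condition for $A_2$; everything else follows mechanically from the corollary.
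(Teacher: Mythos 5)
Your proposal takes essentially the same route as the paper, which obtains the lemma directly from Corollary~\ref{lem:hred} applied to $H_1=\{2,3,5,6,x,y\}$ and $H_2=H_3=\{1,2,3,5,6,x,y\}$ together with the three bullet equalities, and your degree bookkeeping (including invoking the $\deg(3)=\deg(6)=\deg(y)=2$ side condition for $A_2$) correctly establishes the needed identity $N_G[H_i]=V(A_i)$. One small slip in your illustrative witness: in $A_1$ the vertex $x$ is adjacent to both $2$ and $5$, so $\{2,5,x\}$ is not independent in $A_1[H_1]$; take $\{2,5,y\}$ or $\{3,6,x\}$ instead (your set $\{1,3,5,y\}$ for $A_2$ and $A_3$ is fine).
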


\begin{corollary}\label{lem:k23}

Let $G=(V,E)$ be a subcubic graph and $K\subseteq V$ such that $G[K]$ induces a
$K_{2,3}$. Then, if $G'$ is the graph obtained from $G$ by deleting the
vertices of $K$ and introducing a new vertex $z$ connected to $N(K)$, we have
(i) $\alpha(G')=\alpha(G)-2$ and (ii) if $G'$ contains an apple with a long
stem $A^*_p$, then $G$ also contains an apple with a long stem $A^*_{p'}$, with
$p'\ge p$.  \end{corollary}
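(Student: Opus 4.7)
The plan is to handle the two parts separately, leveraging the structural constraints that subcubicity imposes on an induced $K_{2,3}$. Label the two-side of $K_{2,3}$ as $\{x_1,x_2\}$ and the three-side as $\{y_1,y_2,y_3\}$. Since $x_1,x_2$ already have degree $3$ inside $K$, they have no neighbours outside $K$; since each $y_i$ has degree $2$ inside $K$, it has at most one neighbour outside $K$. Consequently $N_G(K)\setminus K$ consists only of outside-neighbours of three-side vertices, $|N_G(K)\setminus K|\le 3$, and $G'$ remains subcubic.

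For part (i), the inequality $\alpha(G)\ge\alpha(G')+2$ follows by taking a maximum independent set $S'$ of $G'$: if $z\in S'$, then $S'\cap(N_G(K)\setminus K)=\emptyset$ and $(S'\setminus\{z\})\cup\{y_1,y_2,y_3\}$ is independent in $G$; if $z\notin S'$, then $S'\cup\{x_1,x_2\}$ is. For the reverse inequality, take a maximum independent set $S$ of $G$ and normalise $S\cap K$ by a standard exchange argument so that it equals either $\{x_1,x_2\}$ or the whole three-side $\{y_1,y_2,y_3\}$; any other configuration can be strictly enlarged (contradicting maximality) or swapped without changing $|S|$. In the first normalised case $S\setminus K$ is independent in $G'$, and in the second $(S\setminus K)\cup\{z\}$ is; both have size $|S|-2$.

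For part (ii), let $V_A$ be the vertex set of an induced $A^*_p$ in $G'$. If $z\notin V_A$ then $G[V_A]=G'[V_A]=A^*_p$ and we are done with $p'=p$. Otherwise I split into four sub-cases according to the role of $z$ in the apple: the pendant, the stem, an interior cycle vertex, or the cycle vertex carrying the stem. In each sub-case I construct an induced $A^*_{p'}$ in $G$ by replacing $z$ with an appropriate short path through $G[K]$. For the pendant case, replace $z$ with a $y_i$ adjacent to the old stem $s$ ($p'=p$). For the stem case, keep the old cycle and attach a fresh two-vertex stem $v_1, y_i, x_1$ at the same cycle vertex $v_1$ ($p'=p$). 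For an interior cycle vertex with cycle-neighbours $u_1,u_2$, substitute inside the cycle the four-edge path $u_1, y_i, x_1, y_j, u_2$ through $K$, leaving the stem and pendant untouched ($p'=p+2$). For the cycle-with-stem vertex, the apple-degree $3$ forces $|N_G(K)\setminus K|=3$ with three distinct $y$-attachments $y_1,y_2,y_3$; then use the same cycle enlargement through $y_1, x_1, y_2$ together with the new two-vertex stem $y_3, s$ attached at the internal cycle vertex $x_1$ ($p'=p+2$).

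The main technical obstacle is verifying that each substituted subgraph is truly induced, i.e., that no accidental chord or extra adjacency arises; this is most delicate in the final sub-case where several new vertices of $K$ together with the old stem and pendant must all fit together without forming short cycles. The verification in each sub-case reduces to two repeatedly used facts: (a) each $y_i$ has a unique outside neighbour, so its adjacencies with $V(G)\setminus K$ are completely controlled, and distinct vertices of $N_G(K)\setminus K$ correspond to distinct $y_i$'s; and (b) an apple vertex is adjacent to $z$ in $G'$ if and only if it is an apple-neighbour of $z$, because the apple is induced in $G'$. Given these two facts the chord checks in each sub-case become short, and $p'\in\{p,p+2\}\ge p$ follows in every case.
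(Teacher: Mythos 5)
Your proof is correct and follows essentially the same strategy as the paper's: the same two exchange arguments for part (i) (add the two-side when $z$ is unused, swap $z$ for the three-side otherwise, and conversely), and for part (ii) the same substitution of $z$ by vertices of $K$, with $y_i$'s mediating the connections to $N(K)$. Your version is merely more explicit, normalising $S\cap K$ in part (i) and splitting part (ii) into cases by the role of $z$ where the paper gives a single uniform "subdivision" construction.
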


\begin{proof}

To see that $\alpha(G')\ge \alpha(G)-2$ consider a maximum independent set of
$G$. If it contains at most two vertices from $K$ we are done, suppose then
that it contains three vertices. Then, it contains no vertices from $N(K)$. We
therefore augment this set with $z$ in $G'$. To see that $\alpha(G)\ge
\alpha(G')+2$ consider a maximum independent set of $G'$. If it is not using
$z$ then we augment it in $G$ by adding to it the vertices of the smaller part
of the $K_{2,3}$ induced by $K$ (these vertices have no neighbors outside $K$).
If it is using $z$, it is not using any vertices of $N(K)$, we therefore
replace $z$ by the three vertices of the larger part of the $K_{2,3}$.

For the second claim, suppose that $G'$ induces an $A^*_p$, for some $p$. If
this subgraph does not contain $z$, we are done, so suppose it does. If it
does, we find an apple with a long stem in $G$ by replacing $z$ with a vertex
of the smaller part of the $K_{2,3}$ in $G$, and also adding for each vertex of
$N(K)$ that belongs in the supposed $A^*_p$ one neighbor of that vertex from
$K$. It is not hard to see that the result is a sub-division of the original
$A^*_p$.  \end{proof}

\subsection{Applying graph reductions to large extended cycles}
\label{sec:analysis}

Let $G$ be an $(A^*_k,A^*_{k+1},\ldots)$-free subcubic graph. For ease of terminology and notation we will refer to any 
$A^*_t$ with $t\ge k$ simply as a large apple with a long stem. According to  Section~\ref{sec:to}, we may assume that $G$ contains a large 
extended cycle $C^*_p$, i.e. a graph that consists of an induced cycle of length $p$, plus two extra vertices which form a $C_6$ together with
four consecutive vertices of the cycle and have no other neighbours in $C^*_p$.
We denote the vertices of an extended cycle as shown in Figure~\ref{fig:extended}, where we have given labels to the vertices of the
$C_6$, plus some other interesting vertices. In the remainder we use simply $C^*$ to denote the extended cycle and $C_6$ to denote the set of
vertices $\{1,2,3,4,5,6\}$. Without loss of generality, we assume that $p\ge 3k$.

\begin{figure}[ht]
\centering
\begin{tikzpicture}[scale=0.3, transform shape]

\tikzstyle{vertex}=[circle, draw, inner sep=4pt, fill=black!100, minimum width=1.5pt, minimum size=0.15cm]

\draw (0,0) circle (5cm);
\node[vertex] (b) at (-4.97,0.7) {};
\node[vertex] (d) at (4.97,0.7) {};
\node[vertex] (a) at (-4.3,2.6) {};
\node[vertex] (c) at (4.3,2.6) {};
\node[vertex] (1) at (-3,4) {};
\node[vertex] (4) at (3,4) {};
\node[vertex] (2) at (-1.1,4.9) {};
\node[vertex] (3) at (1.1,4.9) {};
\node[vertex] (6) at (-1.1,2.5) {};
\node[vertex] (5) at (1.1,2.5) {};

\draw (1)--(6)--(5)--(4);


\node () at (-5.4,0.8) {\huge{\textbf{$b$}}};
\node () at (5.4,0.8) {\huge{\textbf{$d$}}};
\node () at (-4.8,2.6) {\huge{\textbf{$a$}}};
\node () at (4.8,2.6) {\huge{\textbf{$c$}}};
\node () at (-3.5,4.2) {\huge{\textbf{$1$}}};
\node () at (3.5,4.2) {\huge{\textbf{$4$}}};
\node () at (-1.1,5.4) {\huge{\textbf{$2$}}};
\node () at (1.1,5.4) {\huge{\textbf{$3$}}};
\node () at (-1.5,2.2) {\huge{\textbf{$6$}}};
\node () at (1.5,2.2) {\huge{\textbf{$5$}}};
\end{tikzpicture}
\caption{An extended cycle}
\label{fig:extended}
\end{figure}
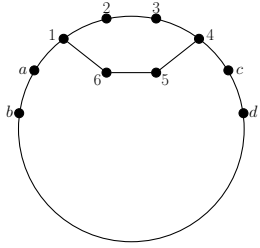

We will now go through a sequence of cases that covers all possible ways in
which $C^*$ may be connected to the rest of the graph.

\medskip
{\it Case 0}: Vertices $2$ and $3$ both have degree $2$ in $G$. In this case we
delete $2,3$ from the graph and add the edge connecting  $1$ to $4$. This decreases $\alpha(G)$ by exactly $1$.
Also, it is not difficult to check that this transformation 
does not create any new forbidden induced subgraphs. 

\medskip
Because of the above we can assume that the set $\{2,3\}$ has a neighbour
outside of $C^*$. We call this vertex $x$. Without loss of generality we assume
that $x$ is connected to $2$. Let us consider how $x$ is connected to the rest
of $C^*$.

\medskip
{\it Case 1.1}: $N(x)\cap C^*=\{2\}$. This case leads to a contradiction, as
$C^*\cup\{x\}\setminus\{3\}$ is a large apple with a long stem.

{\it Case 1.2}: $N(x)\cap C_6=\{2\}$ and $x$ has exactly one neighbour in
$C^*\setminus C_6$.  Let $f$ be that neighbour (which may coincide with one of
$a,b,c,d$).  Then the graph contains a large apple with a long stem: the stem
is made up of $\{5,6\}$, and the cycle from $2,x,f$, plus either the path from
$f$ to $1$, or the path from $f$ to $3$ in $C^*$ (whichever is longer).

{\it Case 1.3}: $N(x)\cap C_6=\{2\}$ and $x$ has two neighbours in $C^*\setminus
C_6$.  Let $f,g$ be these neighbours. If the distance from $f$ to $g$ in $C^*$
is at least $p/3$, then we have a large apple with a long stem: the cycle is
$x,f,g$ plus the path from $f$ to $g$ in $C^*$ and the stem is $2,3$.
Otherwise, one of $f,g$ has a path of length at least $p/3$ to $1$ or $4$ in
$C^*$ which does not contain the other vertex from $\{f,g\}$, so we find a
large apple with a long stem as in Case 1.2.

\medskip
From the above cases we conclude that $x$ has at least two neighbours in $C_6$.
Since the degrees of $1,4$ are already three in $C^*$, we conclude that $x$ has
at least two neighbours in $\{2,3,5,6\}$. 
Let us also rule out two further cases.

\medskip
{\it Case 1.4}: $N(x)\cap C_6=\{2,3\}$. If the degree of $x$ is 2, then we
can apply the $H$-subgraph reduction (Corollary\ref{lem:hred}) with $H=G[\{x\}]$,
which decrease $\alpha(G)$ by $1$. If $x$ has a neighbour outside of $C^*$,
we find a large apple with a long stem: $1654cd\ldots ba1$ and $2,x$. Finally,
if $x$ has a neighbour $f$ in $C^*\setminus C_6$, we find a large apple with a
long stem as in Case 1.3, where $\{5,6\}$ is the stem and the cycle is formed
by $2,x,f$, plus either the path from $f$ to $1$, or the path from $f$ to $3$
in $C^*$ (whichever is longer). 

{\it Case 1.5}: $|N(x)\cap C_6| = 3$. Here we can assume without loss of
generality that $N(x)\cap C_6=\{2,3,5\}$, as other cases are isomorphic to
this. Then, $\{2,3,4,5,x\}$ induces a house and we can apply the house-reduction.

\medskip

We are now ready to state the two main cases of our analysis:

\begin{lemma}\label{lem:x} 
If one of Cases 1.1-1.5 applies, then the instance can be simplified in
polynomial time. If none of Cases 1.1-1.5 applies, then either $N(x)\cap C_6 =
\{2,5\}$ or $N(x)\cap C_6=\{2,6\}$. 
\end{lemma}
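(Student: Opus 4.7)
The plan is to complete the case analysis by verifying two things: that each of the enumerated Cases 1.1--1.5 really yields a polynomial-time simplification, and that the only configurations left standing are exactly $\{2,5\}$ and $\{2,6\}$.

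For the first claim, I would go through Cases 1.1--1.5 and confirm that each either gives a valid reduction or contradicts the hypothesis that $G$ is $(A_k^*, A_{k+1}^*, \ldots)$-free. In Cases 1.1, 1.2, and 1.3 we have already exhibited a large induced apple with a long stem of cycle-length $\Omega(p) \geq k$ (using $p \geq 3k$), so the hypothesis is violated and these cases do not arise. Case 1.4 is slightly more involved: either $\deg(x)=2$ and we apply the $H$-subgraph reduction with $H=\{x\}$ via Corollary~\ref{lem:hred}, or a large apple with a long stem is produced (contradiction), or we reduce to the setting of Case 1.3. Case 1.5 directly permits the house-reduction. All the required detections and reductions can be performed in polynomial time.

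For the dichotomy, I would first use the subcubic hypothesis to rule out $1$ and $4$ as neighbors of $x$, since each already has degree $3$ inside $C^*$ (vertex $1$ is adjacent to $a,2,6$ and vertex $4$ to $c,3,5$). Combined with $|N(x)\cap C_6| \geq 2$ (from Cases 1.1--1.3) and the subcubic bound $|N(x)\cap C_6| \leq 3$, this restricts the analysis to subsets of $\{2,3,5,6\}$ containing $2$ of size $2$ or $3$. For size $2$ the subsets are $\{2,3\}$, $\{2,5\}$, $\{2,6\}$: the first is handled by Case 1.4, and the remaining two are exactly the claimed outputs. For size $3$ the subsets are $\{2,3,5\}$, $\{2,3,6\}$, $\{2,5,6\}$: the first is the explicit instance of Case 1.5, and for the other two I would verify directly that a suitable five-vertex subset of $\{x\} \cup C_6$ induces a house, namely $\{x,1,2,3,6\}$ for the configuration $\{2,3,6\}$ (with chordless $C_4$ on $x,2,1,6$ and triangle on $x,2,3$), and $\{x,1,2,5,6\}$ for the configuration $\{2,5,6\}$. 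Hence Case 1.5 applies in all three sub-cases.

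The main obstacle is not conceptual but one of bookkeeping: one has to unpack the \emph{without loss of generality} in the statement of Case 1.5 to see that it genuinely covers the three $3$-element configurations. The three required house detections are distinct and need individual verification since the symmetries of $C_6$ do not a priori respect the outer cycle of $C^*$. Once these are confirmed, the remaining argument is just enumerative.
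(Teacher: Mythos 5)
Your proof is correct and follows essentially the same route as the paper, which treats this lemma as a summary of the preceding case analysis: the subcubic degree bound excludes $1$ and $4$ as neighbours of $x$, Cases 1.1--1.3 force $|N(x)\cap C_6|\ge 2$, Case 1.4 disposes of $\{2,3\}$, and Case 1.5 disposes of the three-element sets, leaving only $\{2,5\}$ and $\{2,6\}$. Your explicit house verifications on $\{x,1,2,3,6\}$ and $\{x,1,2,5,6\}$ correctly unpack the ``without loss of generality, the cases are isomorphic'' claim that the paper makes in Case 1.5 without further justification.
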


We handle these two cases separately in the following subsections.

\subsubsection{$x$ is adjacent to 2 and 6}

\begin{lemma}\label{lem:typeA} Let $x$ be a vertex adjacent to $2$ and $6$ and
assume $x$ has a neighbour $y$ not in $C^*$. Then $G$ contains an induced
$\Phi$ or an induced $\Pi$ or an induced $\Gamma$ or an induced $\Theta$.
\end{lemma}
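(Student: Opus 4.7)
The plan is a case analysis on $N(y)$, using the $(A^*_k, A^*_{k+1}, \ldots)$-free hypothesis to force $\{3, 5\} \subseteq N(y)$ and then verifying that $G[\{1,2,3,4,5,6,x,y\}]$ is an induced copy of $\Theta$. Since $G$ is subcubic and the vertices $1, 2, 4, 6$ already have degree three in $C^* \cup \{x\}$, the only vertices of $C^*$ that can lie in $N(y)$ are $3$ and $5$; any further neighbour of $y$ must lie in $V(C) \setminus \{1,2,3,4\}$ or outside $C^*$ entirely.

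The first step is to rule out $y$ having a neighbour $z \in V(C) \setminus \{1,2,3,4\}$. Given such $z$, the two arcs of $C$ between $2$ and $z$ sum to $p$, so one has length at least $p/2$; together with the path $2$-$x$-$y$-$z$ that arc forms an induced cycle $Q$ of length at least $p/2 + 3 \geq k$ (with the minor variant of using $3$-$y$-$z$ in place of $2$-$x$-$y$-$z$ when $y$ is adjacent to $3$). A short two-vertex stem hanging off $Q$ then exhibits an induced $A^*_{|Q|}$: the exact stem ($3$-$4$ attached at $2 \in Q$, $5$-$6$ attached at $4 \in Q$, or $1$-$a_C$ attached at $6 \in Q$) is chosen in coordination with which of $3,5$ lie in $N(y)$ so as to avoid chords, and in every configuration the hypothesis is contradicted.

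The second step is to show $\{3,5\} \subseteq N(y)$. Suppose toward contradiction that $5 \notin N(y)$, and consider the \emph{detour cycle} $C_\triangle = 1$-$6$-$5$-$4$-$c_C$-$\cdots$-$a_C$-$1$ of length $p$, obtained from $C$ by replacing the short arc $1$-$2$-$3$-$4$ with $1$-$6$-$5$-$4$. This cycle is induced because $C^*$ is induced and because the degree-$3$ cap together with the fact that $5,6$ have all their neighbours inside $C^*_6 \cup \{x\}$ forbids any chord from $\{5,6\}$ into the long arc. Now look at the subgraph $C_\triangle \cup \{3,y\}$. Vertex $3$ has exactly one neighbour on $C_\triangle$, namely $4$ (since its other $C^*$-neighbour $2$ is not in $C_\triangle$). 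If $3 \in N(y)$, then $y$'s only neighbour in $C_\triangle \cup \{3\}$ is $3$: by the first step $y$ has no neighbour on the long arc of $C$, by assumption $5 \notin N(y)$, and $x$ together with any external neighbour of $y$ lies outside $C^* \supseteq C_\triangle$. Hence $C_\triangle \cup \{3,y\}$ is an induced $A^*_p$ with cycle $C_\triangle$, stem $3$, and long stem $y$, contradicting the hypothesis. If instead $3 \notin N(y)$, then by the first step $y$ has no neighbour in $C$ at all, so $C \cup \{x,y\}$ itself is an induced $A^*_p$ with cycle $C$, stem $x$, and long stem $y$: again a contradiction. The analogous argument with $3$ and $5$ swapped shows $3 \in N(y)$ as well.

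Finally, since $\{3,5\} \subseteq N(y)$ and $y$ already has $x$ as a third neighbour, $N(y) = \{x,3,5\}$ exactly. A direct check of $G[\{1,2,3,4,5,6,x,y\}]$ shows that its edges are precisely the six edges of the $C_6$ on $\{1,\ldots,6\}$ together with $x$-$2$, $x$-$6$, $x$-$y$, $y$-$3$, $y$-$5$; no extra edge arises because $C^*$ is induced and $x, y$ have no further neighbour in this vertex set. These are exactly the eleven edges of $\Theta$ depicted in Figure~\ref{fig:theta}. The main obstacle is the first step, where the choice of short stem hanging off $Q$ must be coordinated with the partially constrained $N(y) \cap \{3,5\}$ so that no chord from $y$ or the long arc spoils the induced $A^*$; the second and third steps are essentially mechanical given the induced structure of $C^*$ and the subcubic degree bound.
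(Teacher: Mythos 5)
There is a genuine gap, and it is fatal to the overall structure of your argument. Your first step claims that $y$ cannot have any neighbour $z\in V(C)\setminus\{1,2,3,4\}$, because the long arc of $C$ from $2$ (or $3$) to $z$ together with the path through $x,y$ gives a large induced cycle $Q$ to which a two-vertex stem can be attached. This fails precisely when $z\in\{a,b,c\}$ (labels as in Figure~\ref{fig:extended}). Take $z=a$ with $y$ not adjacent to $3$ or $5$: then $Q$ consists of $2,3,4,c,d,\dots,b,a,y,x$, i.e.\ it absorbs all of $C^*\cup\{x,y\}$ except $1,5,6$, and every candidate stem has a second attachment point on $Q$ --- the vertex $6$ is adjacent to $x\in Q$, the vertex $1$ is adjacent to both $2\in Q$ and $a\in Q$, and $5$'s only partner $6$ is blocked by $x$. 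The same obstruction occurs for $z=b$ and $z=c$ (for $z=c$ the stem candidate $3,4$ fails because $4$ is adjacent to $c\in Q$). One can check that no other large induced cycle in the configuration admits a valid two-vertex stem either, so these configurations are genuinely consistent with $(A^*_k,A^*_{k+1},\ldots)$-freeness; they are not contradictions.

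This is exactly why the lemma's conclusion offers four possible outcomes rather than just $\Theta$: the paper handles $y$ adjacent to $a$ by exhibiting an induced $\Gamma$ (on $a,y,1,2,x,6,3,4,5$), $y$ adjacent to $b$ (but not $a$) by exhibiting an induced $\Pi$, and $y$ adjacent to $c$ by first deducing (via a second extended cycle and Lemma~\ref{lem:x}) that $y$ is also adjacent to $5$ and then exhibiting an induced $\Phi$. Your proof never produces these subgraphs, and your final claim that $N(y)=\{x,3,5\}$ always holds (hence always a $\Theta$) is simply too strong. Since your second step explicitly invokes the first (``by the first step $y$ has no neighbour on the long arc of $C$''), the error propagates and the argument does not establish the lemma. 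What survives of your proposal is essentially the treatment of the sub-case in which $y$ has no neighbour in $\{a,b,c\}$ and no neighbour deep in the long arc, where the detour-cycle argument forcing $\{3,5\}\subseteq N(y)$ and the verification of the induced $\Theta$ are sound and agree with the paper's first case.
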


\begin{proof} 

If $y$ is adjacent to $3$, then by Lemma~\ref{lem:x} (and symmetry) $y$ is also
adjacent to $5$ and hence vertices $1,2,3,4,5,6,x,y$ induce a $\Theta$. 

If $y$ is adjacent to $c$, then vertices $2,3,4,x,y,c$ create a cycle of length
$6$ which, together with the path $1ab\ldots d$ gives a second large extended
cycle.  Therefore, by Lemma~\ref{lem:x} applied to this extended cycle, vertex
$5$ must be adjacent to $y$ and hence vertices $1,2,x,6,y,5,c,4$ induce a
$\Phi$.

If $y$ is adjacent to $a$, then vertices $a,y,1,2,x,6,3,4,5$ induce a $\Gamma$ with a possible missing common neighbour of $3$ and $5$
(any neighbour of these vertices must be common by Lemma~\ref{lem:x}).

If $y$ is adjacent to $b$ and not adjacent to $a$, then vertices
$a,b,y,1,2,x,6,3,4,5$ induce a $\Pi$ with a possible missing common neighbour
of $3$ and $5$ (any neighbour of these vertices must be common by
Lemma~\ref{lem:x}).

From now on, we assume $y$ has no neighbours in $\{3,5,a,b,c\}$. If $y$ has
neighbours on $C^*\setminus C_6$, then we can distinguish at most 3 cycles
containing $y$ as shown in Figure~\ref{fig:3cycles} (if $y$ has only 1
neighbour on $C^*\setminus C_6$, the cycle $C_2$ is missing).

\begin{figure}[ht]
\centering
\begin{tikzpicture}[scale=0.3, transform shape]

\tikzstyle{vertex}=[circle, draw, inner sep=4pt, fill=black!100, minimum width=1.5pt, minimum size=0.15cm]

\draw (0,0) circle (5cm);

\node[vertex] (2) at (0,5) {};

\node[vertex] (1) at (-2.5,4.3) {};
\node[vertex] (3) at (2.5,4.3) {};

\node[vertex] (a) at (-4.3,2.6) {};
\node[vertex] (c) at (4.3,2.6) {};
\node[vertex] (1') at (-4.3,-2.6) {};
\node[vertex] (3') at (4.3,-2.6) {};

\node[vertex] (y) at (0,0) {};
\node[vertex] (x) at (0,2.5) {};

\node[vertex] (b) at (-4.97,0.7) {};
\node[vertex] (d) at (4.97,0.7) {};

\draw (2)--(x)--(y)--(1');
\draw (y)--(3');


\node () at (-5.4,0.8) {\huge{\textbf{$b$}}};
\node () at (5.4,0.8) {\huge{\textbf{$c$}}};
\node () at (-4.8,2.6) {\huge{\textbf{$a$}}};
\node () at (4.8,2.6) {\huge{\textbf{$4$}}};

\node () at (0,5.6) {\huge{\textbf{$2$}}};

\node () at (-2.8,4.7) {\huge{\textbf{$1$}}};
\node () at (2.8,4.7) {\huge{\textbf{$3$}}};


\node () at (0.5,2.5) {\huge{\textbf{$x$}}};
\node () at (0.5,0.5) {\huge{\textbf{$y$}}};

\node () at (-2.5,1.5) {\Huge{\textbf{$C^1$}}};
\node () at (2.5,1.5) {\Huge{\textbf{$C^3$}}};
\node () at (0,-2.5) {\Huge{\textbf{$C^2$}}};

\end{tikzpicture}
\caption{Vertex $y$ has neighbours on $C$}
\label{fig:3cycles}
\end{figure}
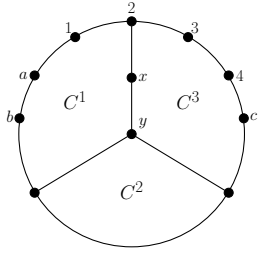

We observe that
at least one of the cycles $C^1$, $C^2$, $C^3$ is large, i.e. has length at least $p/3$.
Then $G$ contains a large apple with a long stem

\begin{itemize}
\item $C^*\cup\{x,y\}\setminus\{5,6\}$ if $y$ has no neighbours on $C^*\setminus C_6$,
\item $C^1\cup\{3,4\}$ if $C^1$ is large,
\item $C^2\cup\{x,2\}$ if $C^2$ is large,
\item $C^3\cup\{1,a\}$ if $C^3$ is large.
\end{itemize} 
A contradiction in all cases shows that $y$ has a neighbour in $\{3,5,a,b,c\}$ and hence $G$ contains an induced
$\Phi$ or an induced $\Pi$ or an induced $\Gamma$ or an induced $\Theta$
\end{proof}

We therefore find ourselves in the following context: $N(x)\cap C_6 = \{2,6\}$ and $N(x)\setminus C^*=\emptyset$. 
Before we proceed to the rest of the analysis, let us identify another relevant vertex. If $3$ has a
neighbour outside $C^*$ we call that vertex $y$. By Lemma \ref{lem:x} (and
appropriate symmetry) $y$ is also connected to $5$. We have also argued that
$x$ and $y$ are not adjacent. We will in the remainder assume that the degree
of $x$ is at least as large as the degree of $y$. This is without loss of
generality, as the two vertices can be exchanged by an appropriate automorphism
of $C^*$. In what follows, we analyze all possible adjacencies of $x$ and $y$ to the vertices of $C^*$.

\medskip
{\it Case 2.1}: If $x$ has degree 2 and $y$ does not exist (therefore $3,5$
have degree $2$), then we apply the $H$-subgraph reduction (Corollary~\ref{lem:hred}) with $H=\{x,3,5\}$, 
in which case $\alpha(G[H])=\alpha(G[N[H]])=3$ and hence  the removal of $N[H]$ decreases $\alpha(G)$  by $3$. 

{\it Case 2.2}: Assume $x$ has degree 2 and $y$ exists (therefore, $y$ is
connected to $3,5$).  We have assumed without loss of generality that $x$ has
at least as high degree as $y$, therefore $y$ has no other neighbour.  We
delete from the graph vertices $2,3,x,y$. If $G'$ is the new graph, we claim that
$\alpha(G')=\alpha(G)-2$. The inequality $\alpha(G')\ge\alpha(G)-2$ is clear,
since no independent set can take more than two of the deleted vertices. To see
that $\alpha(G)\ge\alpha(G')+2$, take a maximum independent set in $G'$. If it
contains vertex $5$, then it does not contain $4$ or $6$. Therefore, we can augment it
with $x,3$. If it contains $6$, we can augment it similarly by adding $y,2$.
Finally, if it contains neither $5$ nor $6$, we augment it with $x,y$.

{\it Case 2.3}: If $x$ is connected to $a$, $\{x,1,a,2,6\}$ induces a $K_{2,3}$, we can therefore invoke Corollary \ref{lem:k23} to simplify the graph.

{\it Case 2.4}: If $x$ is connected to $c$, then $x61ab\ldots cx$ together with
$3,4$ form a large apple with a long stem.

{\it Case 2.5}: If $x$ is connected to $d$, then $x21ab\ldots dx$ together with
$3,4$ form a large apple with a long stem.

{\it Case 2.6}: If $x$ is connected to a vertex $f$ of $C^*$ in the path from
$b$ to $d$ (but not $b$ or $d$), then: if $f$ is closer to $a$ than to $c$, we take
the path $xf\ldots dc432x$ plus $1,a$; otherwise we take $xf\ldots ba12x$ plus
$3,4$. In both cases these form a large apple with a long stem.

{\it Case 2.7}: If $x$ is connected to $b$ and $y$ does not exist, then we can
apply the $H$-subgraph reduction with $H=\{x,1,3,5\}$. It is not difficult to check that $\alpha(G[H])=\alpha(G[N[H]])=4$  
and hence the removal of $N[H]$ decreases $\alpha(G)$  by $4$. 

{\it Case 2.8}: Assume $x$ is connected to $b$, $y$ exists and it has degree 2
(that is, $y$ is connected only to $3,5$). We then delete from the graph the
vertices $\{x,y,1,2,3,5,6\}$ and add a new vertex $z$ adjacent to $a,b,4$. We
claim $\alpha(G')=\alpha(G)-3$. To see that $\alpha(G)\ge \alpha(G')+3$ take an
independent set of the new graph. If it does not include $z$ then we augment it
with $\{2,6,y\}$; if it does include $z$, it does not contain any of $a,b,4$,
so we replace $z$ with $\{1,x,3,y\}$. To see that $\alpha(G')\ge \alpha(G)-3$
take an independent set of $G$. If it contains at most three of the deleted
vertices we are done. If it contains four, these must be $\{1,x,3,5\}$,
therefore the set does not contain any of $a,b,4$; in this case we replace the
deleted vertices by $z$.

The new graph does not have a large apple with a long stem that uses $z$ and
both $a,b$, since that would induce a triangle. If, on the other hand, it has
an apple with a long stem that uses $z$ and at most two of its neighbors, then
$G$ also has a sub-divided copy of the same subgraph if we replace $z$ with
$1,2,3$.

{\it Case 2.9}: Finally, suppose $x$ is connected to $b$, $y$ exists and $y$
has degree $3$.  Since $x$ and $y$ have the same degree, we may exchange their
roles, and by symmetry and the same case analysis that we did for $x$ we
conclude that $y$ must be connected to $d$ (otherwise one of the previous cases
applies). We transform the graph as follows: we delete the vertices
$1,2,3,4,5,6,x,y$ and add two new vertices $z,w$ such that $z,w$ are connected
to each other, $z$ is connected to $a,b$, and $w$ is connected to $c,d$. We
claim that $\alpha(G')=\alpha(G)-3$.  First, to obtain $\alpha(G')\ge
\alpha(G)-3$, take a maximum independent set of $G$.  If it contains a vertex
from $a,b$ and a vertex from $c,d$, then it contains at most three of the
deleted vertices, since the six deleted vertices which are not adjacent to a
vertex of the independent set induce a $C_6$. In all other cases, the
independent set in $G$ contains at most four of the deleted vertices.  However,
if the set does not contain any of $a,b$, we can augment it with $z$ in $G'$,
while if it does not contain any of $c,d$ we can add to it $w$. To see that
$\alpha(G)\ge \alpha(G')+3$, take a maximum independent set in $G'$. If it is
using $z$, then it does not contain $a$ or $b$. In $G$ we replace $z$ with
$1,x,3,5$. The situation is symmetric if the set contains $w$.  Finally, if it
does not contain either $z$ or $w$, we observe that deleting the neighbours of
the set among the removed vertices gives a $C_6$, of which we can select three
vertices. The transformation does not introduce a new large apple with a long
set, since the closed neighbourhoods of $z,w$ include a triangle, therefore if
one or two of these vertices is used in the apple we can replace them with an
appropriate induced path through the deleted vertices in $G$.

\subsubsection{$x$ is adjacent to 2 and 5}

\begin{lemma}\label{lem:typeB} Let $x$ be a vertex adjacent to $2$ and $5$ and
assume $x$ has a neighbour $y$ not in $C^*$. Then $G$ contains an induced $A_1$
or an induced $A_2$ or an induced $A_3$ (Figure~\ref{fig:subgraphs}).
\end{lemma}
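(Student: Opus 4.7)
The plan is to mirror the case analysis of Lemma~\ref{lem:typeA}, examining $N(y)$ and showing that every admissible configuration either induces one of $A_1, A_2, A_3$, or else forces a large induced apple with a long stem (which is excluded by the standing hypotheses on $G$ and by the reductions of earlier cases).

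First I would consider the case in which $y$ has a neighbour in $C_6$. Because $1$ and $4$ already have degree three in $C^*$, and $5$ has degree three in $G$ via $\{4,6,x\}$, the only possible neighbours of $y$ in $C_6$ are $3$ and $6$. Applying Lemma~\ref{lem:x} symmetrically to $y$ (viewed as a neighbour of $3$ outside $C^*$, analogous to the role of $x$ as a neighbour of $2$), the only feasible option is $N(y)\cap C_6=\{3,6\}$: the pair $\{3,5\}$ is blocked by the degree of $5$, and having only a single neighbour in $C_6$ places $y$ in one of Cases~1.1--1.5, which either simplifies the graph or yields a large apple. With $N(y)\cap C_6=\{3,6\}$ the set $\{1,2,\dots,6,x,y\}$ induces precisely $A_1$, the only absent edges being non-edges of the induced extended cycle $C^*$.

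Next I would treat the case where $y$ has no neighbour in $C_6$ but has a neighbour on the long arc $C^*\setminus C_6$. If that neighbour is not $a$ (and, by the same argument, not $d$), then $x, y$ together with a long portion of $C^*$ form an induced cycle of length at least about $p/3$; attaching a stem of length two, taken from $\{3,4\}$ or $\{1,a\}$ depending on the side, produces a large induced apple with a long stem, a contradiction. Hence, up to the left/right symmetry of $C^*$, we may assume $y$ is adjacent to $a$.

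Finally, with $y$ adjacent to $x$ and $a$, I would examine the degrees of $3, 6, y$ in $G$. If $\deg(3)=\deg(6)=\deg(y)=2$, then $\{a,1,\dots,6,x,y\}$ induces exactly $A_2$. Otherwise at least one of $3,6$ has an external neighbour; by another symmetric application of Lemma~\ref{lem:x} that neighbour must be a common neighbour $c$ of $3$ and $6$, since any other configuration is excluded by the degree constraints or produces a large apple. If in addition $c$ is adjacent to $y$, then $\{a,1,\dots,6,x,y,c\}$ induces $A_3$. The remaining configurations---$c$ not adjacent to $y$, or $y$ having no neighbour in $C^*$ at all---are eliminated by exploiting the absence of separating cliques: one traces an alternate short path from $y$ (or $c$) back to $C^*$ avoiding the small triangles, and combines it with a large arc of $C^*$ to build a large induced apple with a long stem, contradicting the hypothesis. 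The main obstacle is this last step: verifying that the induced copy of $A_3$ is genuine requires controlling the third neighbour of $c$ and of $y$, and ruling out every stray chord; this relies crucially on the induced-ness of $C^*$ and on the degree ceiling imposed by the subcubic assumption.
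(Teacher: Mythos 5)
Your case skeleton matches the paper's (a neighbour of $y$ in $C_6$ forces $N(y)\cap C_6=\{3,6\}$ and hence $A_1$; $y$ adjacent to $a$ or $c$ leads to $A_2$ or $A_3$; all other attachments of $y$ yield a large apple with a long stem), but the step you yourself flag as ``the main obstacle'' is exactly the step that carries the lemma, and your proposed substitute does not work. When $3$ and $6$ have a common external neighbour $z$ and you must show $z$ is adjacent to $y$, appealing to the absence of separating cliques and to ``alternate short paths avoiding the small triangles'' has no traction: $\{3,6,z\}$ is not a clique (3 and 6 are non-adjacent), there are no triangles in the configuration, and a third neighbour of $z$ placed arbitrarily in the graph does not by itself produce a long induced cycle to hang an apple on. The paper's actual argument is a structural trick you are missing: the set $(C^*\setminus\{2,3\})\cup\{x,y\}$ is \emph{another} induced extended cycle, whose $C_6$ is $a,1,6,5,x,y$ (the long arc of $C^*$ is unchanged, and $y$--$a$, $y$--$x$, $x$--$5$ splice $x,y$ in place of $2,3$). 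In that extended cycle, $6$ plays the role of a middle vertex of the hexagon, so Lemma~\ref{lem:x} applies to its external neighbour $z$ and forces $z$ to be adjacent to one of the two opposite hexagon vertices $x,y$; since $x$ already has its three neighbours $2,5,y$, the vertex $z$ must be adjacent to $y$, giving $A_3$. Without this (or an equivalent) argument the lemma is not proved.

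Two smaller points. First, the case ``$y$ has no neighbour in $C^*$'' needs no separating-clique detour either: the main cycle $C$ together with $x$ (whose only neighbour on $C$ is $2$, since $5\notin C$) and $y$ is already an induced $A^*_p$. Second, the exceptional arc-neighbours of $y$ are $a$ and $c$ (the attachment points of $1$ and $4$), not $a$ and $d$; and the stems you name are not always available --- when $y$'s unique arc-neighbour is $b$, the pair $\{1,a\}$ fails because $a$ is adjacent to $b$ on the long cycle, and one must use $\{1,6\}$ instead (this is precisely the one place where the paper says the argument differs from Lemma~\ref{lem:typeA}).
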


\begin{proof}
If $y$ is adjacent to $3$ or $6$, then $y$ is adjacent to both $3$ and $6$ (Lemma~\ref{lem:x}) and hence $G$ contains an induced $A_1$.

Assume $y$ is adjacent to $a$. Then, if all three vertices $3,6,y$ have degree
$2$ in $G$, then $G$ contains an induced $A_2$. If vertex $3$ has degree three,
it has a common neighbour with $6$ (by Lemma \ref{lem:x}), call this neighbour
$z$. We claim that $z$ must also be connected to $y$, which will give an
induced $A_3$. To see this, consider the set of vertices
$(C^*\setminus\{2,3\})\cup\{x,y\}$. This set induces an extended cycle, where
the $C_6$ is now formed by $a,1,6,5,x,y$. Since $z$ is connected to $6$, it
must be connected to one of $\{x,y\}$ (Lemma \ref{lem:x}). However, $x$ already
has three neighbours ($2,5,y$), therefore, $z$ is connected to $y$.

If $y$ is adjacent to $c$ this is symmetric to $y$ being adjacent to $a$. So,
we suppose that $y$ is adjacent to none of $3,6,a,c$.  The rest of the proof is
similar to that of Lemma~\ref{lem:typeA} with the only difference that if $y$
is adjacent only to $b$ this time we can find a large apple with a long stem,
where the stem is $\{1,6\}$ and the cycle goes through $byx234cd\ldots b$.
\end{proof}

\begin{lemma}\label{lem:typeB-0} Let $x$ be a vertex adjacent to $2$ and $5$
and assume $x$ has a neighbour in $C^*\setminus C_6$. Then this neighbour is
one of $a$ and $c$.  \end{lemma}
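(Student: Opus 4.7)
The plan is to argue by contradiction: assume that $x$ has a neighbour $f$ on the main cycle $C$ with $f\notin\{1,2,3,4,a,c\}$, and construct a forbidden induced apple with a long stem $A^*_t$ for some $t\ge k$, contradicting the $(A^*_k,A^*_{k+1},\ldots)$-freeness of $G$.

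I would label the vertices of $C$ in cyclic order as $v_1=1,v_2=2,v_3=3,v_4=4,v_5=c,\ldots,v_{p-1}=b,v_p=a$, so that by assumption $f=v_i$ for some $i\in\{6,\ldots,p-1\}$. The vertex $x$, whose neighbours are $2,5,f$, together with $f$ splits $C\cup\{x\}$ into two cycles: cycle $A=x-v_2-v_1-v_p-v_{p-1}-\cdots-v_i-x$ of length $\ell_A=p-i+4$, and cycle $B=x-5-4-v_5-v_6-\cdots-v_i-x$ of length $\ell_B=i-1$. Both are induced, since $C^*$ is induced and $N(x)\cap C_6=\{2,5\}$. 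Because $\ell_A+\ell_B=p+3\ge 3k+3$, at least one of them has length at least $k$.

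The next step is to attach a stem. If $\ell_A\ge k$, I would use $\{3,4\}$: vertex $3$ has only $2$ as a neighbour on cycle $A$ (its other $C^*$-neighbour $4$ is off cycle $A$, and any third neighbour of $3$ lies outside $C^*$ since $C^*$ is induced), while vertex $4$ has no neighbour on cycle $A$, because its $C^*$-neighbours $5$ and $c=v_5$ are both absent from cycle $A$ whenever $i\ge 6$. Hence cycle $A\cup\{3,4\}$ induces an $A^*_{\ell_A}$, a contradiction. Symmetrically, if $\ell_B\ge k$, the stem $\{6,1\}$ works: vertex $6$ has $5$ as its only neighbour on cycle $B$, and vertex $1$'s other neighbours $2$ and $a=v_p$ both lie off cycle $B$ (the latter precisely because $i\le p-1$), so cycle $B\cup\{6,1\}$ induces an $A^*_{\ell_B}$.

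The point requiring the most care is verifying that the two attachments are clean. Observe that the stem $\{3,4\}$ would collide with cycle $A$ exactly when $f=c$ (then $4\sim c$ lies on cycle $A$), and $\{6,1\}$ would collide with cycle $B$ exactly when $f=a$ (then $1\sim a$ lies on cycle $B$). These two obstructions are precisely the excluded cases, which explains why the conclusion $f\in\{a,c\}$ is the best possible. Beyond this case analysis I do not foresee substantial obstacles, since everything relies only on $C^*$ being induced, $G$ being subcubic, and $N(x)\cap C_6=\{2,5\}$.
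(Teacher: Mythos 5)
Your proof is correct and follows essentially the same route as the paper: the paper also splits $C\cup\{x\}$ at $f$ into two cycles (phrased as a WLOG on whether $f$ is closer to $a$ or to $c$ along $C^*\setminus C_6$), takes the longer one, and attaches the stem $\{3,4\}$ (respectively $\{6,1\}$ in the symmetric case) to obtain a forbidden $A^*_t$. Your version merely makes explicit the length bookkeeping and the verification that the stems attach cleanly exactly when $f\notin\{a,c\}$, which the paper leaves implicit.
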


\begin{proof}

Let $f$ be the neighbour of $x$ in $C^*\setminus C_6$ (note that $f$ may
coincide with $b$ or $d$). Suppose that the distance in $C^*\setminus C$ from
$f$ to $a$ is at least as large as the distance from $f$ to $c$ (the other case
is symmetric). We take the cycle $xf\ldots a12x$ and the stem $\{3,4\}$ to form a large apple with a long stem.
\end{proof}

Because of symmetry, we will in the remainder assume without loss of generality
that if $x$ has a neighbour in $C^*\setminus C_6$, then that neighbour is $a$. We
recall that, since $x$ is connected to $2,5$, if $3$ has a neighbour outside of
$C^*$, this neighbour is common with $6$. We will call such a vertex (if it
exists) $y$. By the same reasoning that we applied for $x$, vertex $y$ cannot have a
neighbour outside $C^*$ (therefore $x$ and $y$ are not adjacent), and if it has
a neighbour in $C^*\setminus C_6$, this must be $c$. As in the previous section,
we will assume without loss of generality that the degree of $x$ is at least as
high as that of $y$, otherwise we can exchange their roles.

\medskip
{\it Case 3.1}: If $x$ has degree two and $y$ does not exist, then we can apply
the $H$-subgraph reduction (Corollary~\ref{lem:hred}) with $H=\{x,3,6\}$, in
which case $\alpha(G[H])=\alpha(G[N[H]])=3$ and hence the removal of $N[H]$
decreases $\alpha(G)$ by $3$.

{\it Case 3.2}: Assume $x$ has degree two and $y$ exists: $y$ is adjacent to $3,6$
and no other vertex, since we assumed that $x$ has degree at least as high as
$y$. We now remove from the graph the vertices $\{x,y,1,2,3,4,5,6\}$ and add a
new vertex $z$ adjacent to $a,c$. We claim that $\alpha(G')=\alpha(G)-3$. To
see that $\alpha(G')\ge \alpha(G)-3$ take an independent set of $G$. If it
contains at most three of the deleted vertices, we are done. If it contains
four, then it must contain both $x$ and $y$, which implies that it contains $1$
and $4$.  The set therefore does not contain $a$ or $c$, so the deleted
vertices can be replaced by $z$. For the other direction, to see that
$\alpha(G)\ge \alpha(G')+3$, take an independent set of $G'$. If it does not
contain $z$, we augment the set with $x,3,6$; if it does, then it does not
contain $a$ or $c$, so we replace $z$ with $\{1,4,x,y\}$.  Our transformation
does not introduce a new forbidden induced subgraph, as any path through $z$ in
the transformed graph can be mapped to the path $a1234c$ in $G$.

{\it Case 3.3}: Assume $x$ is adjacent to $a$ and $y$ does not exist. In this case we
delete $x$ from the graph and claim that the independence number is unchanged.
To see this, take a maximum independent set $S$ in $G$. If $x\not\in S$ we are
done.  Suppose then that $x\in S$, therefore $S$ does not contain any of
$a,2,5$.  As a result, it contains at most two vertices from $C_6$. Consider
now the set $(S\setminus(C_6\cup\{x\}))\cup\{1,3,5\}$. This is a valid
independent set (since $S$ does not contain $a$) of the same size as $S$.

{\it Case 3.4}: Assume $x$ is adjacent to $a$, $y$ exists and it has degree $2$. In
this case we delete from the graph vertex $6$. We claim that the independence
number stays unchanged. To see this, take a maximum independent set $S$. If
$6\not \in S$ we are done, so suppose that $6\in S$, therefore $1,5,y\not\in
S$. If $3\not\in S$, then $S\setminus\{6\}\cup\{y\}$ is an independent set of
the same size in the new graph, and we are done. Suppose then that $3\in S$,
which means that $2,4\not\in S$. We now observe that the set
$S\setminus\{a,x,3,6\}\cup\{2,5,y\}$ is an independent set of size $|S|$ in the
new graph. To see that it has the same size, we note that $a$ is adjacent to
$x$. To see that it is independent, we note that $S\setminus\{a,x,3,6\}$ does
not contain any neighbours of $\{2,5,y\}$.

{\it Case 3.5}: If $x$ is adjacent to $a$ and $y$ exists and is adjacent to
$c$, then we can apply the $H$-subgraph reduction with $H=\{x,y,1,4\}$, in
which case $\alpha(G[H])=\alpha(G[N[H]])=4$ and hence the removal of $N[H]$
decreases $\alpha(G)$ by 4.

\section{Conclusion}

Summarizing the discussion in the previous sections, we make the following conclusion, which extends several previously known results.  

\begin{theorem}
The maximum independent set problem can be solved in polynomial time in the class of  $(A^*_k,A^*_{k+1},\ldots)$-free subcubic graphs
for any fixed value of $k$.
\end{theorem}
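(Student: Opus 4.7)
The plan is to describe a polynomial-time recursive algorithm that, on an $(A^*_k,A^*_{k+1},\ldots)$-free subcubic graph $G$, either solves the input directly in a tractable sub-case or performs a local reduction that strictly decreases $|V(G)|$ while preserving the class and shifting the independence number by an integer $c$ that the algorithm records. Since each round runs in polynomial time and deletes at least one vertex, the whole procedure terminates after $O(n)$ iterations.

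First I would preprocess: split $G$ into connected components, then invoke the clique-separator decomposition of \cite{clique-separator-1,clique-separator-2} to reduce to connected graphs without separating cliques. Both steps preserve the hereditary class and run in polynomial time. After preprocessing, I follow Step~1 of Section~\ref{sec:pre}: check by brute force whether $G$ contains an induced $S_{2,2,2}$; if not, invoke the algorithm of \cite{subcubic}. Otherwise, use Lemma~\ref{lem:cycles} with $p=300k$ to test for an induced cycle of length at least $p$; if none exists, then by \cite{chordality} the tree-width of $G$ is bounded by a function of $k$ and a standard dynamic-programming algorithm on a tree decomposition solves the problem.

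The interesting case is that $G$ contains both an induced $S_{2,2,2}$ and an induced cycle of length at least $p=300k$. Applying Lemma~\ref{lem:b1} yields an induced cycle of length at least $25k$ through the center of an induced $S_{2,2,2}$, and Lemma~\ref{lem:b2} applied to that cycle then produces either an induced large apple with long stem $A^*_t$ with $t\geq 25k/8>k$ (impossible by hypothesis) or an induced large extended cycle $C^*_t$ with $t\geq 25k/8\geq 3k$. In the surviving case I apply the case analysis of Section~\ref{sec:analysis}: each of Cases~0--3.5 either contradicts the hypothesis by exhibiting an induced $A^*_t$ with $t\geq k$, or identifies an induced gadget (one of $\Phi$, $\Pi$, $\Gamma$, $\Theta$, house, $K_{2,3}$, $A_1$, $A_2$, $A_3$, or one of the ad-hoc configurations) whose reduction (Lemmas~\ref{lem:phi}, \ref{lem:pi}, \ref{lem:gamma}, Corollaries~\ref{lem:hred} and~\ref{lem:k23}, or the bespoke deletions and vertex-introductions inside Section~\ref{sec:analysis}) produces a smaller $(A^*_k,A^*_{k+1},\ldots)$-free subcubic graph $G'$ with $\alpha(G)-\alpha(G')$ equal to an explicit constant that the algorithm records. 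I then restart the procedure on $G'$ and, at the end, recover $\alpha(G)$ by summing the constants accumulated along the recursion together with the value returned by the base case.

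The main obstacle is verifying case-exhaustiveness of the final step: one must check that the case analysis of Section~\ref{sec:analysis} covers every possible interaction between a large extended cycle $C^*$ and the rest of $G$. Lemma~\ref{lem:x} reduces this to the two symmetric configurations $N(x)\cap C_6=\{2,6\}$ and $N(x)\cap C_6=\{2,5\}$, and Lemmas~\ref{lem:typeA} and~\ref{lem:typeB} then force each remaining attachment either to be part of an induced apple with a long stem of size at least $k$ (contradicting the hypothesis) or to produce one of the reducible gadgets listed above. Because $p=300k$ is strictly larger than any reducible gadget, each reduction removes only a constant number of vertices while changing $\alpha$ by a bounded constant, yielding a polynomial-time algorithm overall.
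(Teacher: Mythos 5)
Your proposal is correct and follows essentially the same route as the paper: the preprocessing via clique separators, the three-way dispatch ($S_{2,2,2}$-free / no long induced cycle / both present), the passage through Lemmas~\ref{lem:b1} and~\ref{lem:b2} to a large extended cycle with the same arithmetic ($300k \to 25k \to 25k/8 \ge 3k$), and the exhaustive local reductions of Section~\ref{sec:analysis}, with termination guaranteed by the strict decrease in $|V(G)|$ at each round.
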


Since $A^*_t$ contains $S_{2,k,k}$ for any $t>k$, we derive the following corollary from this theorem. 
\begin{corollary}
The maximum independent set problem can be solved in polynomial time in the class of  $S_{2,k,k}$-free subcubic graphs
for any fixed value of $k$.
\end{corollary}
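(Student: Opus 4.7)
The plan is to deduce the corollary directly from the theorem by proving a purely structural embedding statement: every sufficiently long apple with a long stem contains an induced $S_{2,k,k}$. Once this is in hand, the corollary follows by taking contrapositives and invoking the theorem with an appropriate choice of parameter.

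Concretely, I would show that for every $t \ge 2k+2$ the graph $A^*_t$ contains an induced copy of $S_{2,k,k}$. Labelling the cycle of $A^*_t$ as $c_0, c_1, \ldots, c_{t-1}$, writing $s$ for the stem vertex (the neighbour of $c_0$ off the cycle) and $s'$ for the pendant attached only to $s$, I would take $c_0$ as the center of the tripod and form the three legs $c_0 s s'$, $c_0 c_1 \cdots c_k$, and $c_0 c_{t-1} \cdots c_{t-k}$. The hypothesis $t \ge 2k+2$ guarantees that the two cyclic legs are internally disjoint and that their endpoints $c_k$ and $c_{t-k}$ are non-adjacent on the chordless cycle; combined with the fact that $s'$ has no neighbour other than $s$, this shows that the $2k+3$ chosen vertices induce exactly $S_{2,k,k}$. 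Contrapositively, any $S_{2,k,k}$-free subcubic graph $G$ must be $(A^*_{2k+2}, A^*_{2k+3}, \ldots)$-free, and the theorem applied with the fixed parameter $2k+2$ immediately produces a polynomial-time algorithm for the maximum independent set problem on $G$.

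The derivation is essentially immediate, and the only point requiring attention is the quantitative form of the embedding. A naïve attempt based on $t > k$ would fail, since the two cyclic legs would either overlap (when $t \le 2k$) or meet in adjacent endpoints on the cycle (when $t = 2k+1$), in either case introducing a chord that destroys the induced copy of $S_{2,k,k}$. Once the threshold $2k+2$ is pinned down, there is no further obstacle, and the corollary is a direct consequence of the preceding theorem.
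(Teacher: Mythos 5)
Your derivation matches the paper's: the corollary is obtained from the theorem by observing that a sufficiently long apple with a long stem contains an induced $S_{2,k,k}$, so $S_{2,k,k}$-free subcubic graphs form a subclass of $(A^*_{k'},A^*_{k'+1},\ldots)$-free subcubic graphs for some fixed $k'$. In fact your version is more careful than the paper's one-line justification, which asserts the containment for all $t>k$ --- literally false for $k<t\le 2k+1$, for exactly the overlap/adjacency reasons you identify (the center of any induced tripod in $A^*_t$ must be the unique degree-3 vertex on the cycle) --- but since the theorem holds for every fixed parameter, your threshold $t\ge 2k+2$ yields the corollary just as well.
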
 

This result brings us much closer to the dichotomy described in Conjecture~\ref{con:1}. However, proving this conjecture in its whole generality remains a challenging open problem.

\bibliographystyle{plain}
\bibliography{s2kk-full}

\end{document}